\newtheorem{thm}{Theorem}
\newtheorem{lem}[thm]{Lemma}
\newtheorem{prop}[thm]{Proposition}
\theoremstyle{definition}
\theoremstyle{definition}
\newtheorem{mydef}[thm]{Definition}
\theoremstyle{definition}
\title{Property Testing for Differential Privacy}
 \author{Anna Gilbert \and Audra McMillan\footnote{Email: amcm@umich.edu}} 
 \date{\it Department of Mathematics \\ University of Michigan, Ann Arbor, MI 48109, USA}
\begin{document}

\maketitle

\begin{abstract}
We consider the problem of property testing for differential privacy: with black-box access to a purportedly private algorithm, can we verify its privacy guarantees? In particular, we show that \emph{any privacy guarantee that can be efficiently verified is also efficiently breakable} in the sense that there exist two databases between which we can efficiently distinguish. We give lower bounds on the query complexity of verifying pure differential privacy, approximate differential privacy, random pure differential privacy, and random approximate differential privacy. We also give algorithmic upper bounds. The lower bounds obtained in the work are infeasible for the scale of parameters that are typically considered reasonable in the differential privacy literature, even when we suppose that the verifier has access to an (untrusted) description of the algorithm. A central message of this work is that verifying privacy requires compromise by either the verifier or the algorithm owner. Either the verifier has to be satisfied with a weak privacy guarantee, or the algorithm owner has to compromise on side information or access to the algorithm.
\end{abstract}


\section{Introduction}

Recently, differential privacy (DP) has gained traction outside of theoretical research as several companies (Google, Apple, Microsoft, Census, etc.) have announced deployment of large-scale differentially private mechanisms \citep{RAPPOR, appleDPwhatsnew, Abowd:2017, Bolin:2017}. This use of DP, while exciting, might be construed as a marketing tool used to encourage privacy-aware consumers to release more of their sensitive data to the company. In addition, the software behind the deployment of DP is typically proprietary since it ostensibly provides commercial advantage. This raises the question: with limited access to the software, can we verify the privacy guarantees of purportedly DP algorithms?

Suppose there exists some randomised algorithm $\mathcal{A}$ that is claimed to be $\Xi$- differentially private and we are given query access to $\mathcal{A}$. That is, the domain of $\mathcal{A}$ is the set of databases and we have the power to choose a database $D$ and obtain a (randomised) response $\mathcal{A}(D)$. How many queries are required to verify the privacy guarantee? We formulate this problem in the property testing framework for pure DP, approximate DP, random pure DP, and random approximate DP.

\begin{mydef}[Property testing with side information]\label{proptesting}
A property testing algorithm with query complexity $q$, proximity parameter $\alpha$, privacy parameters $\Xi$ and side information $S$, makes $q$ queries to the black-box and:
\begin{enumerate}
\item (Completeness) ACCEPTS with probability at least $2/3$ if $\mathcal{A}$ is $\Xi$-private and $S$ is accurate.
\item (Soundness) REJECTS with probability at least $2/3$ if $\mathcal{A}$ is $\alpha$-far from being $\Xi$-private.
\end{enumerate}
\end{mydef}

In this early stage of commercial DP algorithms, approaches to transparency have been varied. For some algorithms, like Google's RAPPOR, a full description of the algorithm has been released \citep{RAPPOR}. On the other hand, while Apple has released a white paper \citep{Apple:2017} and a patent \citep{Thakurta:2017}, there are still many questions about their exact implementations. We focus on the two extreme settings: when we are given \emph{no information} about the black-box (except the domain and range), and the \emph{full information} setting where we have an untrusted full description of the algorithm $\mathcal{A}$. A similar formulation of DP in the property testing framework was first introduced in \cite{Dixit:2013}, who consider testing for DP given oracle access to the probability density functions on outputs. \cite{Dixit:2013} reduce this version of the problem to testing the Lipschitz property of functions and make progress on this more general problem. 

Both settings we consider, full information and no information, are subject to fundamental limitations. We first show that \emph{verifying} privacy is at least as difficult as \emph{breaking} privacy, even in the full information setting. That is, suppose $r$ samples are sufficient to verify that an algorithm is $\Xi$-private. Then Theorem~\ref{verifydistinguish} implies that for every algorithm that is not $\Xi$-private, there exists some pair of neighbouring databases $D$ and $D'$ such that $r$ samples from $\mathcal{A}(D)$ is enough to distinguish between $D$ and $D'$. Differential privacy is designed so that this latter problem requires a large number of samples. This connection has the unfortunate implication that verifiability and privacy are directly at odds: \emph{if a privacy guarantee is efficiently verifiable, then it mustn't be a strong privacy guarantee}. 

For the remainder of this work we restrict to discrete distributions on $[n]$. Our upper and lower bounds in each setting are contained in Table \ref{results}. We rule out sublinear verification of privacy in every case except verifying approximate differential privacy in the full information setting. That is, for all other definitions of privacy, the query complexity for property testing of privacy is $\Omega(n)$.

Each privacy notion we consider is a relaxation of pure differential privacy. Generally, the privacy is relaxed in one of two ways: either privacy loss is allowed to occur on unlikely \emph{outputs}, or privacy loss is allowed to occur on unlikely \emph{inputs}. The results in Theorem \ref{checkpairs} and the lower bounds in Table \ref{results} imply that for efficient verification, we need to relax in \emph{both} directions. That is, random approximate DP is the only efficiently verifiable privacy notion in the no information setting. Even then, we need about $1/\delta^2$ queries \emph{per database} to verify $(\epsilon, \delta)$-approximate differential privacy. Theorem~\ref{aDPUB} shows that random approximate DP can be verified in (roughly) $O(\frac{4n(1+e^{2\epsilon})\log(1/\gamma)}{\gamma\delta^2})$ samples, where (roughly) $\delta$ and $\gamma$ are the probabilities of choosing a disclosive output or input, respectively. This means verification is efficient if $\delta$ and $\gamma$ are small but not too small. This may seem insufficient to those familiar with DP, where common wisdom decrees that $\delta$ and $\gamma$ should be small enough that this query complexity is infeasibly large. 

There have been several other relaxations of pure differential privacy proposed in the literature, chief among them R\'enyi DP \citep{Mironov:2017} and concentrated DP \citep{Dwork:2016}. These relaxations find various ways to sacrifice privacy, with a view towards allowing a strictly broader class of algorithms to be implemented. Similar to pure DP, R\'enyi and concentrated DP have the property that two distributions $P$ and $Q$ can be close in TV distance while the pair $(P,Q)$ has infinite privacy parameters. Thus, many of the results for pure differential privacy in this work can be easily extended to R\'enyi and concentrated DP. We leave these out of our discussion for brevity.

\begin{table}[t]
\caption{Per Database Query complexity bounds for property testing of privacy}
\label{results}
\vskip 0.15in
\begin{center}
\begin{small}
\begin{tabular}{lccr}
\toprule
&No Information&Full information\\
\hline
\hline
pDP&Unverifiable [Theorem \ref{nogopure}]& $\Omega\left(\frac{1}{\beta\alpha^2}\right)$ [Theorem \ref{FILB}] \\
&&$O\left(\frac{\ln n}{\alpha^2\beta^2}\right)$ [Theorem \ref{FIpureUB}]\\
\hline
aDP& $\Omega(\max\{n^{1-o(1)}, \frac{1}{\alpha^2}\})$ [Theorem \ref{aDPLB}]& $O\left(\frac{\sqrt{n}}{\alpha^2}\right)$[Theorem \ref{FIaDP}]\\
& $O(\frac{n}{\alpha^2})$ [Theorem \ref{aDPUB}]&\\
\bottomrule
\end{tabular}
\end{small}
\end{center}
\vskip -0.1in
\end{table}

One might hope to obtain significantly lower query complexity if the property tester algorithm is given side information, even if the side information is untrusted. We find that this is true for both approximate DP and pure DP, if we allow the query complexity to depend on the side information. A randomised algorithm $\mathcal{A}$ can be abstracted as a set of distributions $\{P_D\}$ where $\mathcal{A}(D)\sim P_D$. We obtain a sublinear verifier for approximate DP. For pure DP, we find the quantity that controls the query complexity is \[\beta = \inf_D\min_i P_D(i),\] the minimum value of the collection of distributions. If $\beta$ is large then efficient verification is possible: verifying that the pure differential privacy parameter is less than $\epsilon+\alpha$ requires $O(\frac{\ln n}{\alpha^2\beta^2})$ queries of each database (Theorem \ref{FIpureUB}). Note that this is not sublinear since $\beta\le1/n$ and if $\beta=0$ then we have no improvement on the no information setting. However, for reasonable $\beta$, this is a considerable improvement on the no information lower bounds and may be efficient for reasonable $n$.

A central theme of this work is that verifying the privacy guarantees that corporations (or any entity entrusted with private data) claim requires compromise by either the verifier or algorithm owner. If the verifier is satisfied with only a weak privacy guarantee (random approximate DP with $\delta$ and $\gamma$ small but not extremely small), then they can achieve this with no side information from the algorithm owner. If the company is willing to compromise by providing information about the algorithm up-front, then much stronger privacy guarantees can be verified. Given this level of transparency, one might be tempted to suggest that the company provide source code instead. While verifying privacy given source code is an important and active area of research, there are many scenarios where the source code itself is proprietary. We have already seen instances where companies have been willing to provide detailed descriptions of their algorithms. In the full information case, we obtain our lowest sample complexity algorithms, including a sublinear algorithm for verifying approximate differential privacy.

This paper proceeds as follows: we start by defining property testing for privacy in Section \ref{background}. We then proceed to the main contributions of this work:
\begin{itemize}
\item Verifying privacy is as hard as breaking privacy (Section \ref{distinguishsec}).
\item In the no information setting, verifying pure differential privacy is impossible while there is a finite query complexity property tester for approximate differential privacy (Section \ref{noinfo}).
\item If $\beta>0$, then finite query complexity property testers exist for pure differential privacy in the full information setting (Section \ref{fullinfo}).
\item A sublinear property tester exists for approximate differential privacy in the full information setting.
\end{itemize}
The main lower bounds and algorithmic upper bounds in this paper are summarized in Table~\ref{results}. 

\section{Background and Problem Formulation}\label{background}

A database is a vector $D$ in $\mathbb{Z}^{|\Omega|}$ for some data universe $\Omega$. That is, if $\omega\in\Omega$, $D_{\omega}$ is the number of copies of $\omega$ in the database. We call two databases $D$, $D'$ \emph{neighbouring} if they differ on a single data point, that is $\|D-D'\|_1=1$. For a randomised algorithm $\mathcal{A}$ and database $D$, we use $\mathcal{A}(D)$ to denote the output and $P_D$ to denote the distribution of $\mathcal{A}(D)$. We will often prefer to view an algorithm as simply a collection of distributions $\{P_D\;|\; D\in\mathbb{Z}^{|\Omega|}\}$. We will only consider discrete distributions in this paper, so $P_D$ is a discrete distribution on $[n]=\{1,\cdots,n\}$. For a distribution $P$, $P^r$ represents $r$ independent copies of $P$.

For much of this paper we will consider algorithms that accept only two databases as input. We use the notation $\mathcal{A} = (P_0, P_1)$ to denote such an algorithm that accepts only two databases $0$ and $1$ as input, and $\mathcal{A}(0)\sim P_0$ and $\mathcal{A}(1)\sim P_1$. The databases 0 and 1 are assumed to be neighbouring.

The privacy notions we discuss will all center around the idea that $P_D$ and $P_{D'}$ should be close for neighbouring databases $D$ and $D'$. As such, we will deal with many measures of closeness between distributions. We collect these definitions for ease of reference.

\begin{mydef} Let $P$ and $Q$ be two distributions. 
\begin{itemize}
\item \emph{(Max divergence)} $D_{\infty}(P,Q) = \sup_{E}\ln\frac{P(E)}{Q(E)}$.
\item \emph{($\delta$-approximate max divergence)} $D_{\infty}^{\delta}(P,Q) = \sup_{E \text{ s.t. } P(E)\ge\delta}\ln\frac{P(E)-\delta}{Q(E)}.$
\item \emph{(KL divergence)} $D_{KL}(P\|Q) = \int_R P(x)\ln \frac{P(x)}{Q(x)} dx$.
\item \emph{(Total Variance (TV) distance)} $\|P-Q\|_{\text{TV}} = \sup_E |P(E)-Q(E)|$.
\end{itemize}
where the $\sup_E$ is the supremum over all events $E$ in the outcome space.
\end{mydef}

\subsection{Privacy Definitions}\label{sec:privacy}

Pure differential privacy is the gold standard for privacy-preserving data analysis. However, it is a very strong definition and as a result, many relaxations of it have gained traction as the work on differential privacy evolves. These relaxations find various ways to sacrifice privacy, with a view towards allowing a strictly broader class of algorithms to be implemented. Since these definitions are becoming standard, we give only a cursory introduction in this section. An introduction can be found in \cite{Dwork:2014} and more in depth surveys can be found in \cite{Vadhan:2016, Dwork:2008, Ji:2014}. 

The idea is simple; suppose the adversary has narrowed the list of possible databases down to neighbouring databases $D$ and $D'$. Any output the adversary sees is almost equally as likely to have arisen from $P_D$ or $P_{D'}$. Thus, the adversary gains almost no information that helps them distinguish between $D$ and $D'$. 

\begin{mydef}[Data Distribution Independent Privacy Definitions]
A randomised algorithm $\mathcal{A}$ is 
\begin{itemize}
\item $\epsilon$-\emph{pure differentially private (pDP)} if $\sup_{D,D'} D_{\infty}(P_D,P_{D'})\le \epsilon$.
\item $(\epsilon, \delta)$-\emph{approximate differentially private (aDP)} if $\sup_{D,D'} D_{\infty}^{\delta}(P_D,P_{D'})\le \epsilon$.
\end{itemize}
where the supremums are over all pairs of neigbouring databases $D$ and $D'$.
\end{mydef}

Note that $\epsilon$-pDP is exactly $(\epsilon , 0)$-aDP. The parameter $\delta$ can be thought of as our probability of failing to preserve privacy. To see this, suppose the distributions $P_D$ output 0 with probability $1-\delta$, and a unique identifier for the database $D$ with probability $\delta$. Then this algorithm is $(0,\delta)$-DP. Thus, we typically want $\delta$ to be small enough that we can almost guarantee that we will not observe this difference in the distributions. In contrast, while it is desirable to have $\epsilon$ small, a larger $\epsilon$ still gives meaningful guarantees (\cite{Dwork:2011}). Typically one should think of $\delta$ as \emph{extremely} small, $\delta\approx10^{-8}$, and $\epsilon$ as \emph{quite} small, $\epsilon\approx 0.1$. The larger $\beta$ is, the more \emph{private} the algorithm is. Unlike the other parameters we will treat $\beta$ as a fixed part of the definition, rather than a variable like $\epsilon$. 

Let $\mathcal{D}$ be a distribution on the data universe $\Omega$. For a database $D$ and datapoint $z$, let $[D_{-1}, z]$ denote the neighbouring database where the first datapoint of $D$ is replaced by $z$. 

\begin{mydef}[Data Distribution Dependent Privacy Definitons]
An algorithm $\mathcal{A}$ is 
\begin{itemize}
\item \emph{$(\epsilon, \gamma)$-Random pure differentially private (RpDP)} if $\mathbb{P}\left(D_{\infty}(P_D,P_{[D_{-1}, z]})\le\epsilon\right)\ge 1-\gamma$.
\item \emph{$(\epsilon, \delta, \gamma)$-Random approximate differentially private (RADP)} if $\mathbb{P}\left(D_{\infty}^{\delta}(P_D,P_{[D_{-1}, z]})\le\epsilon\right)~\ge~1~-~\gamma$.
\end{itemize}
where the probabilities in RpDP and RaDP are over $D\sim\mathcal{D}^n, z~\sim~\mathcal{D}$.
\end{mydef}

Similar to $\delta$, $\gamma$ represents the probability of catastrophic failure in privacy. Therefore, we require that $\gamma$ is small enough that this event is extremely unlikely to occur.

\subsection{Problem Formulation}\label{probform}

Our goal is to answer the question \emph{given these privacy parameters, is the algorithm $\mathcal{A}$ at least $\Xi$-private?} where $\Xi$ is an appropriate privacy parameter. A property testing algorithm, which outputs ACCEPT or REJECT, answers this question if it ACCEPTS whenever $\mathcal{A}$ is $\Xi$-private, and \emph{only} ACCEPTS if the algorithm $\mathcal{A}$ is close to being $\Xi$-private. A tester with side information may also REJECT simply because the side information is inaccurate.

We say that $\mathcal{A}$ is $\alpha$-far from being $\Xi$-private if $\min_{\Xi'} \|\Xi'-\Xi\|>\alpha$, where the minimum is over all $\Xi'$ such that $\mathcal{A}$ is $\Xi'$-private. The metrics used for each form of privacy are contained in Table \ref{metrictable}. 
We introduce the scalar $\lambda$ to penalise deviation in one parameter more than deviation in another parameter. For example, it is much worse to mistake a $(0,0.1)$-RpDP algorithm for $(0,0)$-RpDP than it is to mistake a $(0.1,0)$-RpDP algorithm for $(0,0)$-RpDP. We leave the question of \emph{how much worse} as a parameter of the problem. However, we give the general guideline that if we want an $\alpha$ error to be tolerable in both $\epsilon$ and $\gamma$ then $\lambda\approx \frac{\epsilon}{\gamma}$, which may be large, is an appropriate choice. 

\begin{table}
\caption{Privacy notions, parameters and metrics.}
\label{metrictable}
\vskip 0.15in
\begin{center}
\begin{small}
\begin{tabular}{lccr}
\toprule
 Privacy Notion & $\Xi$ & $\|\Xi-\Xi'\|$ \\
 \hline
 \hline
 pDP & $\epsilon$ & $|\epsilon-\epsilon'|$ \\  
 \hline
 aDP & $(\epsilon, \delta)$ & $|\delta_{\epsilon}-\delta'_{\epsilon}|$\\
 \hline
 RpDP & $(\epsilon, \gamma)$ & $\min\{|\epsilon-\epsilon'|, \lambda|\gamma-\gamma'|\}$\\
 \hline
 RaDP& $(\epsilon, \delta, \gamma)$ & $\min\{|\delta_{\epsilon}-\delta'_{\epsilon}|, \lambda|\gamma-\gamma'|\}$\\
\bottomrule
\end{tabular}
\end{small}
\end{center}
\vskip -0.1in
\end{table}

The formal definition of a property tester with side information was given in Definition~\ref{proptesting}. A \emph{no information} property tester is the special case when $S=\emptyset$. A \emph{full information} property tester is the special case when $S=\{Q_D \}$ contains a distribution $Q_D$ for each database $D$. 
We use $Q_D$ to denote the distribution on outputs presented in the side information and $P_D$ to denote the true distribution on outputs of the algorithm being tested.
For $\alpha>0$ and privacy parameter $\Xi$, a full information (FI) property tester for this problem satisfies:
\begin{enumerate}
\item (Completeness) Accepts with probability at least $2/3$ if the algorithm is $\Xi$-private and $P_D~=~Q_D$ for all $D$.
\item (Soundness) Rejects with probability at least $2/3$ if the algorithm is $\alpha$-far from being $\Xi$-private.
\end{enumerate}

We only force the property tester to ACCEPT if the side information is \emph{exactly} accurate ($P_D=Q_D$). It is an interesting question to consider a property tester that is forced to ACCEPT if the side-information is \emph{close} to accurate, for example in TV-distance. We do not consider this in this work as being close in TV-distance does not imply closeness of privacy parameters.

For a database $D$, we will refer to the process of obtaining a sample from $P_D$ as \emph{querying the black-box}.  It will usually be necessary to input each database into the black-box multiple times. We will use $m$ to denote the number of \emph{unique} databases that are queries to the black-box and $r$ to denote the number of times each database is input. We will only consider algorithms where the number of samples from $P_D$ for each input database is $r$, so our query complexity is $mr$ for each algorithm. Our aim is verify the privacy parameters using as few queries as possible.

\subsection{Related Work}

This work connects to two main bodies of literature. There are several works on verifying privacy with different access models that share the same motivation as this work. In terms of techniques, our work is most closely linked to recent work on property testing of distributions.

Testing DP in the property testing framework was first considered in \cite{Dixit:2013}. The access model in this paper is different to ours but their goal is similar. Recent work by \cite{Ding:2018} studies privacy verification from a hypothesis testing perspective. They design a privacy verification algorithm which aims to find violations of the privacy guarantee. Their algorithm provides promising experimental results in non-adversarial settings (when the privacy guarantee is frequently violated), although they provide no theoretical guarantees. 

Several algorithms and tools have been proposed for formal verification of the DP guarantee of an algorithm \citep{Barthe:2014, Roy:2010, Reed:2010, Gaboardi:2013, Tschantz:2011}. Much of this work focuses on verifying privacy given access to a description of the algorithm. There is a line of work \citep{Barthe:2014, Roy:2010, Reed:2010, Gaboardi:2013, Tschantz:2011, Barthe:2012, Barthe:2013, McSherry:2009} using logical arguments (type systems, I/O automata, Hoare logic, etc.) to verify privacy. These tools are aimed at automatic (or simplified) verification of privacy of source code. There is another related line of work where the central problem is testing software for privacy leaks. This work focuses on \emph{blatant} privacy leaks, such as a smart phone application surreptitiously leaking a user's email \citep{Jung:2008, Enck:2010, Fan:2012}. 

Given sample access to two distributions $P$ and $Q$ and a distance measure $d(\cdot,\cdot)$, the question of distinguishing between $d(P,Q)\le a$ and $d(P,Q)\ge b$ is called \emph{tolerant property testing}. This question is closely related to the question of whether $\mathcal{A}=(P,Q)$ is private. There is a large body of work exploring lower bounds and algorithmic upper bounds for tolerant testing using standard distances (TV, KL, $\chi^2$, etc.) with both $a=0$ and $a>0$ \citep{Kamath:2018, Paninski:2008, Batu:2013, Jayadev:2015, Valiant:2014}. In our work, we draw most directly from the techniques of \cite{Valiant:2011}. 


\section{Lower Bounds via Distinguishability}\label{distinguishsec}

We now turn to examining the fundamental limitations of property testing for privacy. We find that even in the full information setting, the query complexity to verifying privacy is lower bounded by the number of queries required to distinguish between two possible inputs. We expect the latter to increase with the strength of the privacy guarantee.

\begin{mydef}
Databases $D$ and $D'$ are $r$-\emph{distinguishable under} $\mathcal{A}$ if there exists a testing algorithm such that given a description of $\mathcal{A}$ and $x\sim P_{D''}^r$ where $D''\in\{D,D'\}$, it accepts with probability at least $2/3$ if $D''=D$ and rejects with probability at least 2/3 if $D''=D'$.
\end{mydef}

The following theorem says that the per database query complexity of a privacy property testing algorithm is lower bounded by the minimal $r$ such that two neighbouring databases are $r$-distinguishable under $\mathcal{A}$. Recall that we use the notation $\mathcal{A} = (P_0, P_1)$ to denote an algorithm that accepts only two databases $0$ and $1$ as input, and $\mathcal{A}(0)\sim P_0$ and $\mathcal{A}(1)\sim P_1$. The databases 0 and 1 are assumed to be neighbouring.

\begin{thm} \label{verifydistinguish}
Consider any privacy definition, privacy parameter $\Xi$, and let $\alpha>0$. Suppose there exists a $\Xi$-privacy property tester with proximity parameter $\alpha$ and (per database) query complexity $r$. Let $\mathcal{A}$ be an algorithm that is $\alpha$-far from $\Xi$-private. If the privacy notion is
\begin{itemize}
\item pDP or aDP then there exists a pair of neighbouring databases that are $r$-distinguishable under $\mathcal{A}$.
\item RpDP or RaDP and $\Xi = (\epsilon, \delta, \gamma)$, then a randomly sampled pair of neighbouring databases has probability at least $\gamma+\frac{\alpha}{\lambda}$ of being $r$-distinguishable.
\end{itemize}
\end{thm}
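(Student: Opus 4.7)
The plan is a black-box simulation reduction: any per-database-$r$ tester $T$ for $\Xi$-privacy can be turned into an $r$-sample distinguisher for some neighbouring pair $(D,D')$. The idea is to feed $T$ a fabricated two-database algorithm whose ``second distribution'' is implicitly either $P_D$ or $P_{D'}$ depending on which of the two hypotheses we are in, and then read off $T$'s accept/reject output.

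\textbf{pDP and aDP.} Since both notions are worst-case over neighbouring pairs, $\alpha$-farness from $\Xi$-private produces a single pair $(D,D')$ that witnesses the violation on its own: for pDP, $D_\infty(P_D,P_{D'})>\epsilon+\alpha$, and for aDP an analogous $\alpha$-gap in the pairwise $\delta_\epsilon$. Consider the two-database algorithms $\tilde{\mathcal{A}}_{\mathrm{priv}}=(P_D,P_D)$, trivially $\Xi$-private because the two output distributions coincide, and $\tilde{\mathcal{A}}_{\mathrm{bad}}=(P_D,P_{D'})$, which inherits the witness and so is $\alpha$-far. By completeness and soundness, $T$ accepts $\tilde{\mathcal{A}}_{\mathrm{priv}}$ with probability $\ge 2/3$ and rejects $\tilde{\mathcal{A}}_{\mathrm{bad}}$ with probability $\ge 2/3$. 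Build the distinguisher $\mathcal{T}^*$: given the description of $\mathcal{A}$ and an input sample vector $x\sim P_{D^*}^r$, simulate $T$, answering its queries to database~$0$ with $r$ fresh samples from $P_D$ drawn using the description and its queries to database~$1$ with $x$; output ACCEPT iff $T$ accepts. The simulation is distributed identically to $T$ on $\tilde{\mathcal{A}}_{\mathrm{priv}}$ when $D^*=D$ and to $T$ on $\tilde{\mathcal{A}}_{\mathrm{bad}}$ when $D^*=D'$, so $\mathcal{T}^*$ succeeds with probability $\ge 2/3$ on both branches, witnessing $r$-distinguishability of $(D,D')$.

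\textbf{RpDP and RaDP.} Unpacking the metric from Table~\ref{metrictable}, $\alpha$-farness from $(\epsilon,\gamma)$-RpDP forces $\mathcal{A}$ not to satisfy $(\epsilon+\alpha,\gamma+\alpha/\lambda)$-RpDP, hence
\[
\mathbb{P}_{D\sim\mathcal{D}^n,\,z\sim\mathcal{D}}\bigl(D_\infty(P_D,P_{[D_{-1},z]}) > \epsilon+\alpha\bigr) \;>\; \gamma + \alpha/\lambda,
\]
with $D_\infty^\delta$ replacing $D_\infty$ for RaDP. For each pair $(D,D')$ in this strongly-violating set, apply the pDP/aDP reduction above to the two-database restriction (with a data distribution supported so the random pair concentrates on $\{(D,D'),(D',D)\}$), producing an $r$-sample distinguisher for $(D,D')$. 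Therefore the random pair is $r$-distinguishable with probability at least $\gamma+\alpha/\lambda$.

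\textbf{Main obstacle.} The delicate point is verifying that the two-database subproblem we hand to $T$ is a genuine ``$\alpha$-far'' instance, not merely a non-private one, so that soundness bites. For pDP/aDP this is immediate from the worst-case definitions. For RpDP/RaDP one needs the $\alpha$-slack the Table~\ref{metrictable} metric (with its scaling by $\lambda$) provides in order to convert ``violating with probability $>\gamma+\alpha/\lambda$'' into ``strongly violating ($D_\infty>\epsilon+\alpha$) with probability $>\gamma+\alpha/\lambda$'', and then to choose the data distribution on the restriction so that the restricted algorithm is itself $\alpha$-far and not merely non-private.
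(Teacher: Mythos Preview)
Your proof is correct and follows essentially the same approach as the paper: pick a violating pair $(D,D')$, feed the privacy tester the two-database algorithm whose second input is the unknown $P_{D^*}$ and whose first input is simulated from $P_D$ (using the description of $\mathcal{A}$ that the distinguishability definition grants), and read off accept/reject. The paper's RpDP/RaDP argument likewise unwinds the Table~\ref{metrictable} metric to extract the $\gamma+\alpha/\lambda$ mass of strongly-violating pairs and then asserts, as you do, that the restricted two-database algorithm $(P_D,P_{D'})$ is itself $\alpha$-far from the random privacy notion; your explicit remark about choosing the data distribution on the restriction is a detail the paper leaves implicit.
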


A major reason that DP has gained traction is that it is preserved even if the (randomised) algorithm is repeated. 
That is, if $k>0$ and $(P_D, P_{D'})$ is private, then $(P_D^k, P_{D'}^k)$ is private with slightly worse privacy parameters. 
Typically we want the privacy parameters to start small enough that $k$ has to be quite large before any pair of neighbouring databases can be distinguished between using the output. If the algorithm is known and trusted then distinguishability may be possible with a feasible number of samples, for example by mean estimation (Laplacian distribution, etc.). This is no longer necessarily the case when make no assumptions on form of the distributions $P_D$. In fact, many of our proofs in the following sections proceed by finding two distribution $P$ and $Q$ such that $(P,Q)$ has high privacy parameters but it is still difficult to distinguish between $P$ and $Q$ (for example because they only differ on a set with small measure). We consider this setting because we are considering an untrusted, possibly adversarial, algorithm owner. In the future we would like to explore assumptions that can be placed on the class of distributions that may lower the sample complexity of distinguishability.

\begin{proof}
We start with pDP or aDP and suppose such a $\Xi$-privacy property testing algorithm exists. Let $\mathcal{A}$ be an algorithm that is $\alpha$-far from $\Xi$-private. Since the privacy parameter is defined as a maximum over all neighbouring databases, there exists a pair of databases $D$ and $D'$ such that $(P_D,P_{D'})$ has the same privacy parameter as $\mathcal{A}$. We can design a tester algorithm that distinguishes between $D$ and $D'$ as follows: given input $x\sim P_{D''}^r$, first sample $y\sim P_D^r$. Then run the privacy property testing algorithm on $\mathcal{B} = (P_{D''}, P_D)$ with sample $(x,y)$. If $D''=D$ then $\mathcal{B}$ is 0-DP, so the property tester will accept with probability at least 2/3. If $D''=D'$ then $\mathcal{B}$ is $\alpha$-far from from $\Xi$-private so the property tester will reject with probability at least 2/3. 

Finally, suppose such a $(\epsilon, \gamma)$-RpDP property testing algorithm exists. Let $\mathcal{A}$ be an algorithm that is $\alpha$-far from $(\epsilon, \gamma)$-private so that, in particular, $\mathcal{A}$ is not $(\epsilon+\alpha,\gamma+\frac{\alpha}{\lambda})$-RpDP. Thus, if we randomly sample a pair of neighbouring databases $D$ and $D'$, with probability $\gamma+\frac{\alpha}{\lambda}$, $(P_D, P_{D'})$ is not $\epsilon+\alpha$-pDP. The remainder of the proof proceeds as above by noticing that the algorithm $(P_D,P_{D'})$ is $\alpha$-far from $(\epsilon, \gamma)$-RpDP and $(P_D, P_D)$ is $(\epsilon, \gamma)$-RpDP. The proof of almost identical for RaDP.
\end{proof}

\section{Restriction to Two Distribution Setting}

Differential privacy is an inherently local property. That is, verifying that $\mathcal{A}$ is $\Xi$-private means verifying that $(P_D,P_{D'})$ is $\Xi$-private, either always or with high probability, for pairs of neighbouring databases $D$ and $D'$. We refer to the problem of determining whether a pair of distributions $(P_0,P_1)$ satisfies $\Xi$-privacy as the \emph{two database setting}. We argue in this section that the hard part of privacy property testing is the two database setting. For this reason, from Section \ref{noinfo} onwards, we only consider the two database setting. Recall that we use the notation $\mathcal{A} = (P_0, P_1)$ to denote an algorithm that accepts only two databases $0$ and $1$ as input, and $\mathcal{A}(0)\sim P_0$ and $\mathcal{A}(1)\sim P_1$. The databases 0 and 1 are assumed to be neighbouring.

An algorithm is non-adaptive if it chooses $m$ pairs of distributions and queries the blackbox with each database $r$ times. It does not choose its queries adaptively. The following is a non-adaptive algorithm for converting a tester in the two database setting to a random privacy setting.

\begin{algorithm}[t]
   \caption{Random-privacy Property Tester}
   \label{alg:random}
\begin{algorithmic}
   \STATE {\bfseries Input:} A two distribution property tester $T$, $\alpha, \gamma>0$, a data distribution $\mathcal{D}$
   \FOR{$i=1:m$}
   \STATE Sample $(D,D')$ neighbours from $\mathcal{D}$.
   \FOR{$j=1:\log(2\lambda/\alpha)$}
   \STATE $x_{ij} = 1$ if $T(P_D,P_{D'})$ REJECTS
   \ENDFOR
   \STATE $x_i = \lfloor \frac{1}{2}+\frac{1}{\log(2\lambda/\alpha)}\sum_{j=1}^{\log(2\lambda/\alpha)}x_{ij} \rfloor$
   \ENDFOR
   \STATE $y = \frac{1}{m}\sum_{i=1}^m x_i$
   \IF{$y\le \gamma+\frac{\alpha}{\lambda}$}
   \STATE {\bfseries Output:} ACCEPT
   \ELSE
   \STATE {\bfseries Output:} REJECT
   \ENDIF
\end{algorithmic}
\end{algorithm}

\begin{thm}[Conversion to random privacy tester]\label{conversion}
If there exists a $\Xi$-privacy property tester for the two database setting with query complexity $r$ per database and proximity parameter $\alpha$, then there exists a privacy property tester for $(\Xi, \gamma)$-random privacy with proximity parameter $2\alpha$ and query complexity \[O\left(r\log\left(\frac{2\lambda}{\alpha}\right)\frac{(\alpha/\lambda+\gamma)^2+\alpha/\lambda}{(\alpha/\lambda)^2}\right).\]
\end{thm}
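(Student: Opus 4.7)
The plan is to analyze Algorithm \ref{alg:random} as two layers of averaging: an inner majority-vote amplification of the two-database tester $T$, and an outer Monte Carlo estimate of the probability that a randomly sampled neighbouring pair fails $\Xi$-privacy. First I would verify the inner loop. The expression $\lfloor \tfrac12 + \mathrm{avg}(x_{ij})\rfloor$ is exactly the majority vote of $\log(2\lambda/\alpha)$ independent runs of $T$; by a standard Chernoff bound on majority amplification of a coin with advantage $1/6$, this many repetitions suffice to bring the per-pair failure probability down to $\alpha/(2\lambda)$. Thus $x_i$ agrees with the intended verdict of $T$ (reject on pairs $\alpha$-far from $\Xi$-private; accept on $\Xi$-private pairs) with probability at least $1-\alpha/(2\lambda)$ whenever the sampled pair lies in one of the two clean regimes. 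Intermediate pairs come with no guarantee, but the analysis below never relies on one.

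Next I would relate $p := \mathbb{E}[x_i]$ to the random-privacy parameter. In the completeness case the $\Xi$-private pairs contribute at most $\alpha/(2\lambda)$ to $p$ and the remaining (at most $\gamma$-mass of) pairs contribute at most $1$, giving $p \le \gamma + \alpha/(2\lambda)$. In the soundness case ($\mathcal{A}$ is $2\alpha$-far), I would use the $\min$ metric of Table \ref{metrictable}: the point $(\Xi+2\alpha,\gamma+2\alpha/\lambda)$ sits at distance exactly $2\alpha$ from $(\Xi,\gamma)$, so $\mathcal{A}$ is not $(\Xi+2\alpha,\gamma+2\alpha/\lambda)$-random private. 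Hence the probability that the sampled pair has two-database privacy parameter exceeding $\Xi+2\alpha$ — and thus is at least $2\alpha$-far, so in particular $\alpha$-far, from $\Xi$-private — strictly exceeds $\gamma+2\alpha/\lambda$. Amplified reliability then gives $p \ge (\gamma+2\alpha/\lambda)(1-\alpha/(2\lambda)) \ge \gamma + 3\alpha/(2\lambda)$ once $\alpha/\lambda$ is below a small constant. The threshold $\gamma+\alpha/\lambda$ used by the algorithm therefore cleanly separates the two regimes by $\Omega(\alpha/\lambda)$ on either side.

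Finally I would apply Bernstein's inequality to $y = (1/m)\sum_{i=1}^m x_i$. Since each $x_i$ is Bernoulli with mean at most $\gamma + 2\alpha/\lambda$, the deviation $|y-p|\le \alpha/(2\lambda)$ fails with probability at most $1/6$ once
\[
m \;=\; O\!\left(\frac{(\alpha/\lambda + \gamma)^2 + \alpha/\lambda}{(\alpha/\lambda)^2}\right),
\]
recovering the stated outer sample rate. Multiplying by $\log(2\lambda/\alpha)$ for the inner repetitions and $r$ for the cost of each call to $T$ gives the claimed per-database query complexity. The main obstacle is keeping the two error budgets decoupled: rather than union-bounding over all $m\log(2\lambda/\alpha)$ inner invocations, linearity of expectation absorbs the inner amplification into the additive $\alpha/(2\lambda)$ shift of $p$, so only the $\Omega(\alpha/\lambda)$ gap around the threshold needs to be preserved — which is exactly what the calculation above confirms, leaving a single clean application of Bernstein's inequality on the outer average.
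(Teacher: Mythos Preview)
Your proposal is correct and follows the same two-layer analysis as the paper: amplify the inner tester to failure probability $\alpha/(2\lambda)$, bound $\mathbb{E}[x_i]$ above by $\gamma+\alpha/(2\lambda)$ in completeness and below by $(1-\alpha/(2\lambda))(\gamma+2\alpha/\lambda)\ge\gamma+3\alpha/(2\lambda)$ in soundness, then apply Bernstein to the outer average $y$. The one imprecision is the claim that ``each $x_i$ is Bernoulli with mean at most $\gamma+2\alpha/\lambda$'' --- this holds only in the completeness case, while in soundness the mean can be anywhere in $[\gamma+3\alpha/(2\lambda),1]$; but the lower-tail deviation bound only gets easier as $p$ grows, so the hardest case is $p=\gamma+3\alpha/(2\lambda)$ and the same choice of $m$ suffices.
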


\begin{proof} The conversion is given in Algorithm \ref{alg:random}.
We first prove completeness. Suppose $\mathcal{A}$ is $(\gamma, \Xi)$-random private. Let \[S=\{(D,D')\;|\; D, D' \text{ are neighbours and } (P_D,P_{D'}) \text{ is $\Xi$-private}\}\]  so $1-\gamma\le \mathbb{P}(S)\le 1$. Our goal is to estimate $\mathbb{P}(S)$ using the empirical estimate given by $\frac{1}{m}\sum_{i=1}^m x_i$. We perform the property tester $\log(\lambda/\alpha)$ times on the pair $(P_D,P_{D'})$ to reduce the failure probability from $1/3$ to $\frac{\alpha}{2\lambda}$ so
\begin{align*}
\mathbb{E}[x_i] &= \mathbb{P}(x_i=1\;|\; (D,D')\in S)\mathbb{P}(S)+\mathbb{P}(x_i=1\;|\; (D,D')\notin S)\mathbb{P}(S^c)\le \frac{\alpha}{2\lambda}+\gamma.
\end{align*}
Now, 
\[\mathbb{P}\left(y\ge \gamma+\frac{\alpha}{\lambda}\right)\le \mathbb{P}\left(y-\mathbb{E}[y]\ge \frac{\alpha}{2\lambda}\right)\le e^{\frac{-m(\alpha/\lambda)^2}{2(\frac{\alpha}{2\lambda}+\gamma)^2+2(\alpha/\lambda)}}\le\frac{1}{3}\]
where the first inequality follows from Bernstein's inequality \citep{Sridharan:2018}. 
Therefore, Algorithm \ref{alg:random} ACCEPTS with probability at least 2/3. To prove soundness suppose $\mathcal{A}$ is $2\alpha$-far from $\Xi$-private. Let  \[S=\{(D,D')\;|\; D, D' \text{ are neighbours and } (P_D,P_{D'}) \text{ is $2\alpha$-far from $\Xi$-private}\}\] so $1\ge \mathbb{P}(S)\ge \gamma+2\alpha/\lambda$. Then $\mathbb{E}[x_i]\ge \left(1-\frac{\alpha}{2\lambda}\right)\left(\gamma+2\frac{\alpha}{\lambda}\right) \ge \gamma+\frac{\alpha}{\lambda}+\frac{\alpha}{2\lambda}.$ Therefore as above, \[\mathbb{P}\left(y\le \gamma+\frac{\alpha}{\lambda}\right)\le \mathbb{P}\left(y-\mathbb{E}[y]\le \frac{-\alpha}{2\lambda}\right)\le \frac{1}{3}.\] So, Algorithm \ref{alg:random} REJECTS with probability at least 2/3.
\end{proof}

Notice that if $\gamma\approx \frac{\alpha}{\lambda}$ then the query complexity is approximately $ r\log(\frac{\lambda}{\alpha})\frac{\lambda}{\alpha}\approx \frac{r\log(\frac{1}{\gamma})}{\gamma}$. One shortcoming of the conversion algorithm in Theorem \ref{conversion} is that we need to know the data distribution $\mathcal{D}$. We can relax to an approximation $\mathcal{D}'$ that is close in TV-distance, but it is not difficult to see that $\|\mathcal{D}-\mathcal{D'}\|_1\le \frac{\alpha}{\lambda}$ is necessary. 

\begin{thm}[Lower bound]\label{checkpairs}
Let $\gamma, \alpha>0$. Let $r$ be a lower bound on the query complexity in the two distribution settting.
If $\gamma+\frac{\alpha}{\lambda}$ is sufficiently small then any non-adaptive $(\Xi,\gamma)$-random privacy property tester with proximity parameter $\alpha$ has query complexity $\Omega(\max\{r, \frac{\lambda}{\alpha}\})$.
\end{thm}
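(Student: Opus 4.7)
The plan is to prove the two parts of the lower bound separately: $\Omega(r)$ by a reduction from the two-distribution setting, and $\Omega(\lambda/\alpha)$ by exhibiting two random-privacy instances that require many pair-samples to tell apart.

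For the $\Omega(r)$ bound, I would, given a hard two-distribution instance $(P_0,P_1)$, construct a random-privacy instance whose hardness is inherited. Take data universe $\Omega=\{a,b\}$ with $\mathcal{D}$ uniform on $\Omega$ and define $\mathcal{A}$ so that $P_D=P_0$ if $D_a$ is even and $P_D=P_1$ if $D_a$ is odd. Every sampled neighbouring pair $(D,[D_{-1},z])$ then induces an output pair in $\{(P_0,P_1),(P_1,P_0)\}$ with probability one, so $\mathcal{A}$ is $(\Xi,\gamma)$-random-private iff $(P_0,P_1)$ is $\Xi$-private, and whenever $(P_0,P_1)$ is $\alpha$-far from $\Xi$-private the algorithm $\mathcal{A}$ is $\alpha$-far from $(\Xi,\gamma)$-random-private (this is where one uses that $\gamma+\alpha/\lambda$ is sufficiently small, in order to exclude alternative parameter choices with $\gamma'\ge 1$). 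A random-privacy tester on $\mathcal{A}$ can then be simulated by a two-distribution tester on $(P_0,P_1)$ using the same total number of queries, since each simulated database query is routed to the appropriate element of $\{P_0,P_1\}$. Hence the random-privacy tester must make at least $r$ queries.

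For the $\Omega(\lambda/\alpha)$ bound, fix distributions $P,Q$ on $[n]$ with disjoint support, set $p:=\gamma+2\alpha/\lambda$, and compare two algorithms: $\mathcal{A}_0$, in which every sampled neighbouring pair induces $(P,P)$, and $\mathcal{A}_1$, in which each sampled pair independently induces $(P,P)$ with probability $1-p$ and $(P,Q)$ with probability $p$. Then $\mathcal{A}_0$ is $(\Xi,0)$-random-private, while $\mathcal{A}_1$ is $\alpha$-far from $(\Xi,\gamma)$-random-private because the bad pairs have infinite pDP parameter and occur with probability $p>\gamma+\alpha/\lambda$. Now suppose a non-adaptive tester samples $m$ neighbouring pairs. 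Under $\mathcal{A}_0$ every sample is $(P,P)$; under $\mathcal{A}_1$ all $m$ samples are $(P,P)$ with probability $(1-p)^m$, and conditioned on that event the tester's entire view is identically distributed under the two instances, so it cannot distinguish them. For the tester to accept $\mathcal{A}_0$ and reject $\mathcal{A}_1$ each with probability at least $2/3$, a standard coupling argument forces $(1-p)^m\le 1/2$, i.e.\ $m\ge \ln 2/p=\Omega(\lambda/\alpha)$ when $\gamma+\alpha/\lambda$ is small. The total query count is at least $m$, giving $\Omega(\lambda/\alpha)$.

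The main obstacle I anticipate is the bookkeeping in the first part: one must transfer $\alpha$-far-ness from the two-distribution setting (measured in a single privacy parameter) to the random-privacy setting (measured under the joint metric of Table~\ref{metrictable}), ruling out alternative $(\epsilon',\gamma')$ that might make the constructed $\mathcal{A}$ random-private. The second part is cleaner, resting on the elementary fact that witnessing a rare event of probability $p$ requires $\Omega(1/p)$ independent trials, and on the observation that non-adaptivity fixes the $m$ sampled pairs before any black-box queries are made.
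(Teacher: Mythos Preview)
Your $\Omega(r)$ reduction is essentially the paper's: both embed a hard two-distribution pair into a random-privacy instance so that every sampled neighbouring pair reduces to that pair. The paper just says ``set $P_D=P$ for half the databases and $P_D=Q$ for the other half''; your parity construction on $\Omega=\{a,b\}$ is a concrete realisation of the same idea, and your remark about ruling out $\gamma'\ge 1$ is exactly the bookkeeping the paper leaves implicit.

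For the $\Omega(\lambda/\alpha)$ part your intuition matches the paper's, but there is a gap in how you specify $\mathcal{A}_1$. In this framework an algorithm is a \emph{fixed} collection $\{P_D\}$; the only randomness in the tester's view comes from drawing $(D,z)\sim\mathcal{D}^n\times\mathcal{D}$ and then querying the black box. Writing ``each sampled pair independently induces $(P,P)$ with probability $1-p$ and $(P,Q)$ with probability $p$'' is not the description of an algorithm but of a per-query randomisation, which the model does not allow (and which would also make your independence claim across the $m$ samples literally false, since repeated queries to the same $D$ must return samples from the same $P_D$). What you need is a fixed assignment $D\mapsto P_D\in\{P,Q\}$ whose \emph{boundary}---the set of neighbouring pairs landing on $(P,Q)$---has probability $p$ under $\mathcal{D}$. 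The paper carries this out explicitly: it picks nested sets $S'\subset S\subset\mathbb{Z}^{|\Omega|}$ with neighbouring-boundary probabilities $\gamma$ and $\gamma+\alpha/\lambda$, assigns $P$ inside and $Q$ outside each set, and compares the two resulting fixed algorithms. Since they differ only on neighbouring pairs of total measure $\alpha/\lambda$, the per-query sampling distributions are within $\alpha/\lambda$ in TV, forcing $m=\Omega(\lambda/\alpha)$. Once you replace your informal $\mathcal{A}_1$ by such a fixed assignment (your $\mathcal{A}_0$ is fine and in fact simpler than the paper's choice), your $(1-p)^m$ coupling argument becomes equivalent to the paper's TV bound. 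One further caveat, which applies equally to the paper: both arguments implicitly assume the non-adaptive tester draws its $m$ neighbouring pairs i.i.d.\ from $\mathcal{D}$; against a tester that fixes its $m$ pairs deterministically, a single fixed $\mathcal{A}_1$ does not suffice and one must randomise the construction via Yao's principle.
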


We conjecture that the lower bound is actually $\Omega(r\frac{\lambda}{\alpha})$. If this is true then Theorem \ref{conversion} gives an almost optimal conversion from the two database setting to the random setting. 

\begin{proof} 
A random privacy property tester naturally induces a property tester in the two distribution setting $(P,Q)$ by setting $P_D=P$ for half the databases and $P_D=Q$ for the other half. Then $\{P_D\}$ is $(\Xi, \gamma)$-random private if $(P,Q)$ is $\Xi$ and $\alpha$-far if $(P,Q)$ is $\alpha$-far. Therefore, the random privacy tester must use at least as many queries as a privacy tester in the two database setting.

Suppose $(P,Q)$ is $\alpha$-far from $\Xi$-private and the data universe is uniformly distributed. If $\gamma+\frac{\alpha}{\lambda}$ is small enough then there exists a pair of nested subsets $S'\subset S\subset\mathbb{Z}^{\Omega}$ such that \[\mathbb{P}((S\times S^c)\cap \{(D,D')\;|\; D, D' \text{ neighbours }\})=\gamma+\frac{\alpha}{\lambda}\] and \[\mathbb{P}((S'\times S'^c)\cap \{(D,D')\;|\; D, D' \text{ neighbours }\})=\gamma.\] Define $P_D=P$ if $D\in S$, $P_D=Q$ if $D\in S^c$, $Q_D=P$ if $D\in S'$ and $Q_D=Q$ if $D\in S'$. Then $\{P_D\}$ is $(\Xi, \gamma)$-random private and $\{Q_D\}$ is $\alpha$-far from $(\Xi, \gamma)$-random private. 

Recall that a non-adaptive property testing algorithm can query by randomly sampling a pair of neighbours $D,D'$, and then sampling $P_D\times P_{D'}$. If $N$ is the normalisation factor, the distributions $\frac{1}{N}\sum_{(D,D') \text{ neighbours}} P_D\times P_{D'}$ and $\frac{1}{N}\sum_{(D,D') \text{ neighbours}} Q_D\times Q_{D'}$ have total variation distance $2\|P-Q\|_{TV}\mathbb{P}((S\backslash S')\times S^c)\cap \{(D,D')\;|\; D, D' \text{ neighbours }\})\le \frac{\alpha}{\lambda}$. Therefore, it takes at least $\frac{\lambda}{\alpha}$ queries to distinguish between $\{P_D\}$ and $\{Q_D\}$. 
\end{proof}

\section{No Information Setting}\label{noinfo}

We first show that no privacy property tester with finite query complexity exists for pDP. We then analyse a finite query complexity privacy property tester for aDP, as well query complexity lower bounds. For the remainder of this work we consider the two databases setting, where each algorithm $\mathcal{A}=(P_0,P_1)$ accepts only two databases, 0 and 1, as input and $\mathcal{A}(0)\sim P_0$ and $\mathcal{A}(1)=P_1$. The databases 0 and 1 are assumed to be neighbouring.

\subsection{Unverifiability}

The impossibility of testing pDP arises from the fact that very low probability events can cause the privacy parameters to increase arbitrarily. In each case we can design distributions $P$ and $Q$ that are close in TV-distance but for which the algorithm $(P,Q)$ has arbitrarily large privacy parameters. This intuition allows us to use a corollary of Le Cam's inequality (Corollary \ref{lecampropertytesting}) to prove our impossibility results.

\begin{lem}\label{lecampropertytesting}
For any privacy definition, let $\alpha$ be the proximity parameter and $\Xi$ be the privacy parameters. Suppose $\mathcal{A}=(P_0,P_1)$ and $\mathcal{B}=(Q_0, Q_1)$ are algorithms such that $\mathcal{A}$ is $\Xi$-DP and $\mathcal{B}$ is $\alpha$-far from being $\Xi$-DP. Then, any privacy property testing algorithm with QC $2r$ must satisfy \[\|(P_0^r\times P_1^r)-(Q_0^r\times Q_1^r)\|_{\text{TV}}\ge \frac{1}{3}\]
\end{lem}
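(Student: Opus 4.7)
The plan is to recognize this as a standard two-point (Le Cam) lower bound for hypothesis testing, applied to the product distributions that arise from making $r$ queries to each of the two databases. The property tester, whether deterministic or randomized, reduces to a $[0,1]$-valued function of the observed samples, and the completeness/soundness gap forces the acceptance probabilities under the two input algorithms to differ by at least $1/3$.

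Concretely, first I would model any privacy property tester with query complexity $2r$ as a (randomized) decision rule $T : [n]^r \times [n]^r \to \{\text{ACCEPT},\text{REJECT}\}$, taking $r$ i.i.d.\ samples from each input database. Writing $t(x,y) = \Pr[T(x,y) = \text{ACCEPT}]$ (over the internal coins of the tester, which are independent of the samples), we have $t : [n]^r \times [n]^r \to [0,1]$. When the black-box is $\mathcal{A}$, the sample is drawn from $P_0^r \times P_1^r$; completeness gives $\mathbb{E}_{(X,Y)\sim P_0^r \times P_1^r}[t(X,Y)] \ge 2/3$. When the black-box is $\mathcal{B}$, the sample is drawn from $Q_0^r \times Q_1^r$; soundness gives $\mathbb{E}_{(X,Y)\sim Q_0^r \times Q_1^r}[t(X,Y)] \le 1/3$.

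Subtracting, the two expectations differ by at least $1/3$. Now I would invoke the standard variational characterization of total variation distance: for any $[0,1]$-valued measurable function $f$ and any two distributions $\mu,\nu$ on a common space, $|\mathbb{E}_\mu[f] - \mathbb{E}_\nu[f]| \le \|\mu - \nu\|_{\text{TV}}$. Applying this with $f = t$, $\mu = P_0^r \times P_1^r$, $\nu = Q_0^r \times Q_1^r$ yields the claimed bound $\|(P_0^r\times P_1^r) - (Q_0^r \times Q_1^r)\|_{\text{TV}} \ge 1/3$.

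There is no real obstacle here; the only thing to be careful about is handling randomized testers (covered by working with the $[0,1]$-valued acceptance function $t$ and using independence of the internal coins from the samples) and being explicit that the query complexity assumption means exactly $r$ i.i.d.\ samples from each of $P_0,P_1$ (respectively $Q_0,Q_1$), so that the relevant joint distribution really is the product measure $P_0^r \times P_1^r$ (respectively $Q_0^r \times Q_1^r$).
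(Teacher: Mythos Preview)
Your proposal is correct and is precisely the standard Le Cam two-point argument. The paper does not actually spell out a proof of this lemma; it introduces it as ``a corollary of Le Cam's inequality'' and uses it as a black box in the subsequent lower bounds, so your write-up is in fact more detailed than what appears in the paper while following the same intended route.
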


\begin{thm}[pDP lower bound]\label{nogopure} 
Let $\alpha>0$ and $\epsilon>0$. No $\epsilon$-pDP property tester with proximity parameter $\alpha$ has finite query complexity.
\end{thm}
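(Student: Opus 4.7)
The plan is to invoke Lemma \ref{lecampropertytesting} for arbitrarily large $r$. The key feature of pDP we exploit is that the max divergence $D_{\infty}$ jumps to $\infty$ the moment one distribution assigns positive mass to an event that has zero probability under the other. Consequently, the pDP parameter is discontinuous with respect to TV-distance, and we can hide an infinite-privacy violation behind an arbitrarily small TV perturbation.

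For each $r \geq 1$, I would construct a pair of two-database algorithms on an output space containing at least three points. Let $\mathcal{A} = (P_0, P_1)$ with $P_0 = P_1$ equal to the uniform distribution on $\{1,2\}$; then $\mathcal{A}$ is $0$-pDP, hence in particular $\epsilon$-pDP. Let $\mathcal{B} = (Q_0, Q_1)$ with $Q_0 = P_0$, and define $Q_1$ by $Q_1(1) = Q_1(2) = (1-\eta)/2$ and $Q_1(3) = \eta$, where $\eta > 0$ will be chosen momentarily. Since $Q_0(\{3\}) = 0$ while $Q_1(\{3\}) = \eta > 0$, we get $D_{\infty}(Q_1, Q_0) = \infty$, so $\mathcal{B}$ has infinite pDP parameter and is therefore $\alpha$-far from $\epsilon$-pDP (using the pDP metric $|\epsilon - \epsilon'|$ from Table~\ref{metrictable}), no matter how large $\alpha$ is.

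Next I would verify the TV-distance hypothesis of Lemma \ref{lecampropertytesting}. Because $P_0 = Q_0$, the joint TV reduces to a single-coordinate comparison:
\[
\|(P_0^r \times P_1^r) - (Q_0^r \times Q_1^r)\|_{\text{TV}} = \|P_1^r - Q_1^r\|_{\text{TV}} \leq r\|P_1 - Q_1\|_{\text{TV}} = r\eta,
\]
using subadditivity of TV under independent products. Choosing $\eta < 1/(3r)$ makes this strictly smaller than $1/3$, contradicting Lemma \ref{lecampropertytesting}. Hence no property tester with per-database query complexity $r$ can exist, and since $r$ was arbitrary, no tester has finite query complexity.

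There is no real technical obstacle here — the whole content of the argument is the observation that $D_{\infty}$ is not continuous with respect to TV, which is exactly what makes pDP unverifiable from black-box samples. The only choice that needs to be made is the construction itself, and its purpose is to exhibit a ``spike'' of mass $\eta$ at the third coordinate: a feature that is essentially invisible in $r$ samples yet pushes the pDP parameter from $0$ all the way to $\infty$.
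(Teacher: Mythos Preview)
Your proof is correct and invokes the same key lemma (Lemma~\ref{lecampropertytesting}) as the paper, but the construction is genuinely different. The paper works on a two-point output space $\{\psi,\omega\}$ with full-support distributions: it sets $P_0=Q_0$ with a tiny mass $e^{-A-\epsilon}$ at $\psi$, and then perturbs the $\psi$-mass of the second coordinate from $e^{-A}$ (for $P_1$) to $e^{-2A}$ (for $Q_1$), so that $\mathcal{A}$ is exactly $\epsilon$-pDP while $\mathcal{B}$ has finite but large pDP parameter $A-\epsilon>\epsilon+\alpha$. The TV bound is obtained via Pinsker's inequality and an explicit KL computation, yielding a quantitative lower bound $r\ge\theta(e^A/A)$.

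Your construction instead introduces a third output symbol and places a spike of mass $\eta$ there, driving the pDP parameter of $\mathcal{B}$ all the way to $\infty$ rather than to some finite $A-\epsilon$. This is simpler and more transparent---the TV bound follows from elementary subadditivity with no KL detour---and it is essentially the same trick the paper later uses in Proposition~\ref{meannogood}. What the paper's construction buys is that it works even on a two-element output space and never uses a zero-probability event, so it shows that the obstruction is not merely the $0/0$ pathology of $D_\infty$ but persists for full-support distributions; it also gives the explicit growth rate $e^A/A$. What your construction buys is brevity and a cleaner exposition of the discontinuity of $D_\infty$ with respect to TV.
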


\begin{proof} Let $r$ be the query complexity of any pDP property tester.
Let $A>2\epsilon+\alpha$. Our goal is to prove that $r>\theta(e^A/A)$. If this is true for all $A$, the query complexity cannot be finite.

Consider algorithms, $\mathcal{A}=(P_0, P_1)$ and $\mathcal{B}=(Q_0, Q_1)$ where \[P_0=Q_0=e^{-A-\epsilon}\chi_{\psi}~+~(1~-~e^{-A-\epsilon})\chi_{\omega},\] \[P_1=e^{-A}\chi_{\psi}+(1-e^{-A})\chi_{\omega} \text{ and } Q_1=e^{-2A}\chi_{\psi}+(1-e^{-2A})\chi_{\omega}.\] Then $\mathcal{A}$ is $\epsilon$-pDP and $\mathcal{B}$ is $\alpha$-far from $\epsilon$-pDP. Now, by Pinsker's inequality,
\[\|(P_0^r, P_1^r), (Q_0^r, Q_1^r)\|_{\text{TV}}~\le~\sqrt{\frac{r}{2}}\sqrt{\text{D}_{\text{KL}}(P_0|Q_0)}.\] Therefore, by Lemma~\ref{lecampropertytesting}, 
\begin{align*}
r&\ge \frac{2}{9}\frac{1}{\text{D}_{\text{KL}}(P_0|Q_0)}\\
&=\frac{2}{9}\frac{1}{e^{-A}\log(e^{A})+(1-e^{-A})\log\left(\frac{1-e^{-A}}{1-e^{-2A}}\right)}=\theta\left(\frac{e^A}{A}\right).
\end{align*}
\end{proof}

We designed two distributions that are equal on a large probability set but for which the ratio $\frac{P_0(x)}{Q_0(x)}$ blows-up on a set with small probability. In Section \ref{fullinfo} we will see that testing pure DP becomes possible if we make assumptions on the algorithm $\mathcal{A}$. The assumption we need will ensure that $\frac{P_0(x)}{Q_0(x)}$ is upper bounded.

\subsection{Property Testing for aDP in the No Information Setting}

Fortunately, the situation is less dire for verifying aDP. Finite query complexity property testers do exist for aDP, although their query complexity can be very large. In the previous section, we relied on the fact that two distributions $P$ and $Q$ can be close in TV-distance while $(P,Q)$ has unbounded privacy parameters. In this section, we first show this is not true for aDP, which sets it apart from the other privacy notions. We then prove that the query complexity is $\Omega\left(\max\left\{n^{1-o(1)}, \frac{1}{\alpha^2}\right\}\right)$, and there exists an algorithm that uses  $O(\frac{4n(1+e^{2\epsilon})}{\alpha^2})$ queries per database. Define
\begin{equation}\label{boundary}
\delta^{\mathcal{A}}_{\epsilon} \ge \max_{D,D' \text{neighbours}}\max_{E} P_D(E)-e^{\epsilon}P_{D'}(E).
\end{equation}
An algorithm is $(\epsilon, \delta)$-aDP if and only if $\delta>\delta^{\mathcal{A}}_{\epsilon}$. The following lemma shows the relationship between the aDP parameters and TV-distance.

\begin{lem}\label{approxTV}
Let $\mathcal{A} = (P_0,P_1)$ and suppose $\mathcal{A}$ is $(\epsilon,\delta)$-aDP and $\alpha>0$. If $\mathcal{B}=(Q_0,Q_1)$ and
\begin{enumerate}
\item $\|P_0-Q_0\|_{\text{TV}}\le \alpha$ 
\item $\|P_1-Q_1\|_{\text{TV}}\le \alpha$,
\end{enumerate} 
then $\mathcal{B}$ is\ $(\epsilon, \delta+(1+e^{\epsilon})\alpha)-aDP.$ Furthermore, if $\alpha\le\frac{1-\delta}{1+e^{\epsilon}}$ then this bound is tight. That is, if $\delta^{\mathcal{A}}_{\epsilon}>0$, then there exists an algorithm $\mathcal{B}=(Q_0,Q_1)$ such that conditions (1) and (2) hold but $\mathcal{B}$ is $\alpha$-far from $(\epsilon, \delta^{\mathcal{A}}_{\epsilon}).$
\end{lem}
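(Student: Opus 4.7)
The lemma has two parts: the forward upper bound on the aDP parameters of $\mathcal{B}$, and the tightness construction. I would handle them in sequence.

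For the upper bound, the approach is a direct three-step chain of inequalities for each event $E$: use $Q_0(E)\le P_0(E)+\alpha$ from the TV bound on $(P_0,Q_0)$, then apply the aDP guarantee of $\mathcal{A}$ to get $P_0(E)\le e^{\epsilon}P_1(E)+\delta$, then use $P_1(E)\le Q_1(E)+\alpha$ from the TV bound on $(P_1,Q_1)$. Combining gives
\[
Q_0(E)\;\le\;e^{\epsilon}Q_1(E)+\delta+(1+e^{\epsilon})\alpha,
\]
and the symmetric argument with the roles of $Q_0$ and $Q_1$ swapped yields the matching inequality. Taking the supremum over all events $E$ with $Q_0(E)\ge \delta+(1+e^{\epsilon})\alpha$ (respectively $Q_1(E)\ge \delta+(1+e^{\epsilon})\alpha$) shows $\mathcal{B}$ is $(\epsilon,\delta+(1+e^{\epsilon})\alpha)$-aDP.

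For tightness, the plan is to exhibit a concrete $\mathcal{B}$ that saturates the bound. Because $\delta^{\mathcal{A}}_{\epsilon}>0$, there is a witness event $E^{*}$ with $P_0(E^{*})-e^{\epsilon}P_1(E^{*})=\delta^{\mathcal{A}}_{\epsilon}$, e.g.\ the canonical maximizer $E^{*}=\{x:P_0(x)>e^{\epsilon}P_1(x)\}$. I would construct $Q_0$ by transporting $\alpha$ units of $P_0$-mass from $E^{*c}$ into $E^{*}$, and $Q_1$ by transporting $\alpha$ units of $P_1$-mass from $E^{*}$ into $E^{*c}$. Both operations change total variation by exactly $\alpha$, and evaluated on $E^{*}$,
\[
Q_0(E^{*})-e^{\epsilon}Q_1(E^{*})=\bigl(P_0(E^{*})+\alpha\bigr)-e^{\epsilon}\bigl(P_1(E^{*})-\alpha\bigr)=\delta^{\mathcal{A}}_{\epsilon}+(1+e^{\epsilon})\alpha.
\]
Hence $\delta^{\mathcal{B}}_{\epsilon}\ge \delta^{\mathcal{A}}_{\epsilon}+(1+e^{\epsilon})\alpha$, so $\mathcal{B}$ is $\alpha$-far from being $(\epsilon,\delta^{\mathcal{A}}_{\epsilon})$-aDP in the metric from Table~\ref{metrictable}.

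The main obstacle is verifying feasibility of the mass transport, namely that $P_0(E^{*c})\ge\alpha$ and $P_1(E^{*})\ge\alpha$ simultaneously for the chosen witness. The hypothesis $\alpha\le(1-\delta)/(1+e^{\epsilon})$ is designed precisely for this: combined with the aDP inequality $P_0(E^{*})\le e^{\epsilon}P_1(E^{*})+\delta$, it leaves room both in $P_0(E^{*c})=1-P_0(E^{*})$ and in $P_1(E^{*})$. When the canonical maximizer does not already satisfy $P_1(E^{*})\ge\alpha$, I would enlarge $E^{*}$ by adjoining a subset of the boundary set $\{x:P_0(x)=e^{\epsilon}P_1(x)\}$ (or, if necessary, splitting an atom using an external source of randomness) so that $E^{*}$ remains a witness for $\delta^{\mathcal{A}}_{\epsilon}$ while gaining enough $P_1$-mass. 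This boundary/feasibility casework is the one place where the constraint on $\alpha$ is genuinely used and is the only delicate step in the argument.
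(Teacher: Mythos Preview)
Your upper-bound argument is exactly the paper's: the three-step chain $Q_0(E)\le P_0(E)+\alpha\le e^{\epsilon}P_1(E)+\delta+\alpha\le e^{\epsilon}Q_1(E)+\delta+(1+e^{\epsilon})\alpha$, together with the symmetric inequality.

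For tightness, however, you are attempting more than the lemma asks and your feasibility patch does not work. The lemma only claims the existence of $\mathcal{B}$ that is $\alpha$-far from $(\epsilon,\delta^{\mathcal{A}}_{\epsilon})$; it does not claim $\delta^{\mathcal{B}}_{\epsilon}\ge\delta^{\mathcal{A}}_{\epsilon}+(1+e^{\epsilon})\alpha$. Your plan to move $\alpha$ mass in \emph{both} $P_0$ and $P_1$ requires $P_0(E^{*c})\ge\alpha$ and $P_1(E^{*})\ge\alpha$ simultaneously, and the hypothesis $\alpha\le(1-\delta)/(1+e^{\epsilon})$ does \emph{not} guarantee the second condition. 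Concretely, take $P_0=\delta\chi_1+(1-\delta)\chi_2$ and $P_1=\chi_2$: then $E^{*}=\{1\}$ is the unique witness, $P_1(E^{*})=0$, the boundary set $\{x:P_0(x)=e^{\epsilon}P_1(x)\}$ is empty, and enlarging $E^{*}$ to include $2$ destroys the witness property. ``Splitting an atom with external randomness'' changes the outcome space $[n]$ and is not a legal move in this framework.

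The paper's route sidesteps this by modifying only \emph{one} distribution. Rewriting the hypothesis as $1-\alpha\ge e^{\epsilon}\alpha+\delta$ and using $P_0(E)=e^{\epsilon}P_1(E)+\delta^{\mathcal{A}}_{\epsilon}$, one gets the disjunction: either $P_0(E)\le 1-\alpha$ or $P_1(E)\ge\alpha$. In the first case take $Q_1=P_1$ and push $\alpha$ mass into $E$ for $Q_0$, yielding $\delta^{\mathcal{B}}_{\epsilon}\ge\delta^{\mathcal{A}}_{\epsilon}+\alpha$; in the second case take $Q_0=P_0$ and remove $\alpha$ mass from $E$ for $Q_1$, yielding $\delta^{\mathcal{B}}_{\epsilon}\ge\delta^{\mathcal{A}}_{\epsilon}+e^{\epsilon}\alpha$. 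Either way $\mathcal{B}$ is $\alpha$-far, with no casework on boundary sets.
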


\begin{proof}
For any event $E$, \[Q_0(E)\le P_0(E)+\alpha\le e^{\epsilon}P_1(E)+\delta_{\mathcal{A}}+\alpha\le e^{\epsilon}Q_1(E)+e^{\epsilon}\alpha+\alpha+\delta_{\mathcal{A}}.\] Similarly, $Q_1(E)\le e^{\epsilon}Q_0(E)+e^{\epsilon}\alpha+\alpha+\delta_{\mathcal{A}}$. 

Conversely, let $\mathcal{A} = (P_0,P_1)$ and suppose $\delta^{\mathcal{A}}_{\epsilon}>0$. There must exist an event $E$ such that $P_0(E)=e^{\epsilon}P_1(E)+\delta^{\mathcal{A}}_{\epsilon}$. The condition on $\alpha$ can rewritten as $1-\alpha\ge e^{\epsilon}\alpha+\delta$ so we must have that either $P_0(E)\le 1-\alpha$ or $P_1(E)\ge\alpha$. 

First, suppose that $P_0(E)\le 1-\alpha$. Then there exists a distribution $Q_0$ such that \\$\|Q_0-P_0\|_{\text{TV}}=\alpha$ and $Q_0(E)=P_0(E)+\alpha$. If we let $Q_1=P_1$ then $Q_0(E)=e^{\epsilon}Q_1(E)+\alpha+\delta^{\mathcal{A}}_{\epsilon}$, which implies $\mathcal{B}=(Q_0,Q_1)$ is $\alpha$-far from $(\epsilon, \delta^{\mathcal{A}}_{\epsilon})$-aDP.

Finally, suppose $P_1(E)\ge\alpha$. Then there exists a distribution $Q_1$ such that \\$\|Q_1-P_1\|_{\text{TV}}=\alpha$ and $Q_1(E)=P_1(E)-\alpha$. Letting $Q_0=P_0$, again $\mathcal{B} = (Q_0,Q_1)$ is $\alpha$-far from $(\epsilon, \delta^{\mathcal{A}}_{\epsilon})$-aDP.
\end{proof}

\begin{thm}[Lower bound]\label{aDPLB}
Let $\alpha, \epsilon, \delta>0$ and suppose $e^{\epsilon}/2+\delta+\alpha<1$. Any $(\epsilon,\delta)$-aDP property tester with proximity parameter $\alpha$ has query complexity \[r~\ge~\max\left\{n^{1-o(1)}, \frac{1}{\alpha^2}\right\}.\] 
\end{thm}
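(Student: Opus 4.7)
The theorem combines two separate lower bounds, $r \geq 1/\alpha^2$ and $r \geq n^{1-o(1)}$, and I would prove each by a separate application of the Le Cam-style reduction in Lemma \ref{lecampropertytesting}. In both cases the task is to exhibit algorithms $\mathcal{A} = (P_0, P_1)$ and $\mathcal{B} = (Q_0, Q_1)$ such that $\mathcal{A}$ is $(\epsilon,\delta)$-aDP, $\mathcal{B}$ is $\alpha$-far from $(\epsilon,\delta)$-aDP, and $\|(P_0^r \times P_1^r) - (Q_0^r \times Q_1^r)\|_{\text{TV}} < 1/3$ whenever $r$ is below the claimed threshold.

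For the $\Omega(1/\alpha^2)$ piece I would start from a two-point construction sitting on the aDP boundary. Take $P_0 = p\,\chi_\psi + (1-p)\,\chi_\omega$ and $P_1 = q\,\chi_\psi + (1-q)\,\chi_\omega$ with $p - e^\epsilon q = \delta$ and $p,q$ bounded away from $0$ and $1$; the hypothesis $e^\epsilon/2 + \delta + \alpha < 1$ guarantees there is room for such a choice. Since $\delta^{\mathcal{A}}_\epsilon = \delta$, the converse direction of Lemma \ref{approxTV} supplies a $\mathcal{B} = (Q_0, Q_1)$ with $\|P_i - Q_i\|_{\text{TV}} \leq \alpha$ that is $\alpha$-far from $(\epsilon,\delta)$-aDP. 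Pinsker's inequality and tensorisation give
\[
\|(P_0^r \times P_1^r) - (Q_0^r \times Q_1^r)\|_{\text{TV}}^2 \;\leq\; \tfrac{r}{2}\bigl(D_{\text{KL}}(P_0\|Q_0) + D_{\text{KL}}(P_1\|Q_1)\bigr),
\]
and because the $Q_i$'s arise from an $O(\alpha)$ TV perturbation of Bernoullis bounded away from $\{0,1\}$, each KL divergence is $O(\alpha^2)$. Combining with Lemma \ref{lecampropertytesting} forces $r = \Omega(1/\alpha^2)$.

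For the $n^{1-o(1)}$ piece I would reduce from the Valiant-Valiant tolerant closeness-testing lower bound, which says that distinguishing $\|P - P'\|_{\text{TV}} \leq \delta$ from $\|P - P'\|_{\text{TV}} \geq \delta + \alpha$ for distributions on $[n]$ requires $n^{1-o(1)}$ samples per distribution. The functional $\delta^{\mathcal{A}}_\epsilon$ defined in (\ref{boundary}) specialises at $\epsilon = 0$ to the total variation distance $\|P_0 - P_1\|_{\text{TV}}$, and more generally is a supremum of linear functionals of the pair, so Valiant-Valiant style hard instances transfer: embed the closeness-testing instance by setting the aDP pair equal to the closeness-testing pair on $[n]$, choosing the witnessing event $E$ in (\ref{boundary}) to coincide with the set on which the TV distance is attained, so that the matched family is $(\epsilon, \delta)$-aDP and the non-matched family has $\delta^{\mathcal{A}}_\epsilon \geq \delta + \alpha$. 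Indistinguishability of the samples transfers verbatim since the marginals are those of the closeness-testing hard instance.

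The main obstacle is the second part: the Valiant-Valiant moment-matching/Poissonisation construction is tailored to total variation rather than to the $\delta^{\mathcal{A}}_\epsilon$ functional, so the hard instance has to be engineered to also control the $e^\epsilon$-scaled comparison for $\epsilon > 0$. In particular, one must rule out the possibility that a small-probability event quietly pushes the ``private'' family across the $\alpha$ boundary while staying invisible to a sublinear tester; this is what makes the hypothesis $e^\epsilon/2 + \delta + \alpha < 1$ necessary, since it leaves enough probability mass free to perform the required perturbations without truncation artifacts.
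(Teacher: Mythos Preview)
Your approach to the $\Omega(1/\alpha^2)$ bound is essentially the paper's: a two-point Bernoulli construction on the aDP boundary, followed by Pinsker and Lemma~\ref{lecampropertytesting}. The paper writes down explicit distributions ($P_0=Q_0$ uniform on two points, $P_1$ and $Q_1$ differing by~$\alpha$ in one coordinate) rather than invoking Lemma~\ref{approxTV} for existence, but the content is the same. One correction: the hypothesis $e^\epsilon/2+\delta+\alpha<1$ is used \emph{here}, to make the Bernoulli parameters land in $(0,1)$; it plays no role in the $n^{1-o(1)}$ part.

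For the $n^{1-o(1)}$ bound your reduction has a real gap. You embed a tolerant TV-closeness instance by taking the aDP pair equal to the closeness-testing pair and claim the ``far'' family then has $\delta^{\mathcal{B}}_\epsilon\ge\delta+\alpha$. But $\|Q_0-Q_1\|_{\text{TV}}\ge\delta+\alpha$ only gives $Q_0(E)-Q_1(E)\ge\delta+\alpha$ on the witnessing event; once you subtract the extra $(e^\epsilon-1)Q_1(E)$, the quantity $Q_0(E)-e^\epsilon Q_1(E)$ can fall below $\delta+\alpha$, and in the standard Valiant--Valiant hard instances $Q_1(E)$ is bounded away from zero. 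So the ``far'' family need not be $\alpha$-far in the aDP sense and the reduction breaks. You flag this as an obstacle in your last paragraph but do not resolve it.

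The paper avoids a black-box reduction altogether. It observes that $\pi((P,Q))=\delta_\epsilon$ is itself a symmetric property of the pair that is $((1+e^\epsilon)\alpha,\alpha)$-weakly-continuous (Lemma~\ref{approxTV}), and applies Valiant's general lower bound for such properties (Lemma~\ref{lowfreqbias}) directly to~$\pi$. The hard instance is engineered so that the witnessing event has measure \emph{zero} under one distribution: partition $[n]$ into equal thirds $R_1,R_2,R_3$, put $P_0=P_1=Q_0$ on $R_1\cup R_2$ and $Q_1$ on $R_2\cup R_3$, with the small mass $a=\Theta(\delta)$ on $R_1$ and $R_3$. Then $Q_0(R_1)-e^\epsilon Q_1(R_1)=Q_0(R_1)$ regardless of~$\epsilon$, which is exactly what your reduction lacked, while the two pairs agree on every index of probability at least $3a/n$, so Lemma~\ref{lowfreqbias} yields the $n^{1-o(1)}$ bound.
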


The proof of the $1/\alpha^2$ component of the lower bound relies on Lemma \ref{lecampropertytesting} in a similar way to Theorem \ref{nogopure}. The proof of the $n^{1-o(1)}$ lower bound borrows a technique from \cite{Valiant:2011}. The lemma uses the fact that if two distributions only differ on elements of $[n]$ with low probability, then many samples are needed to distinguish between them. 

A property $\pi$ of a distribution is a function $\pi:\mathbb{Z}^{\Omega}\to\mathbb{R}$. It is called symmetric if for all permutations $\sigma$ and distributions $p$ we have $\pi(p)=\pi(p\circ\sigma)$. It is $(\alpha, \beta)$-weakly-continuous if for all distributions $p^+$ and $p^-$ satisfying $\|p^+-p^-\|_{\text{TV}}\le \beta$ we have $|\pi(p^+)-\pi(p^-)|\le \alpha$. The following lemma will be used in the proof of Theorem \ref{aDPLB}.

\begin{lem}\cite{Valiant:2011}\label{lowfreqbias}
Given a symmetric property $\pi$ on distributions on $[n]$ that is $(\alpha, \beta)$-weakly-continuous and two distributions, $p^+$, $p^-$ that are identical for any index occurring with probability at least $1/k$ in either distribution but where $\pi(p^+)>b$ and $\pi(p^-)<a$, then no tester can distinguish between $\pi>b-\epsilon$ and $\pi<q+\epsilon$ in $k\cdot\frac{\beta}{1000\cdot 2^{4\sqrt{\log n}}}$ samples.
\end{lem}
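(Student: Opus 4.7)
The statement is the lower-bound machinery of \cite{Valiant:2011}; the plan is to outline the Poissonization plus moment-matching argument that proves it. The key reduction is that because $\pi(p^+) > b$ and $\pi(p^-) < a$ with $b - a > 2\epsilon$, any tester that decides $\pi > b - \epsilon$ versus $\pi < a + \epsilon$ must in particular distinguish $p^+$ from $p^-$ with constant advantage. Hence it is enough to show that $r = k \cdot \beta / (1000 \cdot 2^{4\sqrt{\log n}})$ samples leave the two sample distributions $o(1)$-close in total variation.

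First I would Poissonize: replace the fixed sample size $r$ by $\mathrm{Poi}(r)$, so per-element counts become independent $\mathrm{Poi}(r p_i)$ random variables. Because $\pi$ (and any tester for it) is symmetric, the tester's decision depends only on the \emph{fingerprint} --- the histogram of nonzero counts with the element labels forgotten. Decompose each $p^\pm$ into its high-frequency part $p_H^\pm$ (indices with $p_i \ge 1/k$) and its low-frequency part $p_L^\pm$. By hypothesis $p_H^+ = p_H^-$, so the high-frequency fingerprints can be coupled exactly, and it suffices to bound the TV distance between fingerprints induced by $p_L^+$ and $p_L^-$.

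Next, use the $\beta$-slack from weak continuity to perturb $p_L^\pm$ to $\tilde p_L^\pm$ with $\|\tilde p_L^\pm - p_L^\pm\|_{\mathrm{TV}} \le \beta$ so that the perturbed low-frequency parts agree on all power sums $\sum_i p_i^j$ for $j \le c \sqrt{\log n}$. This is a linear feasibility problem whose room comes exactly from $\beta$; by weak continuity it changes $\pi$ by at most $\alpha$ on each side, which only shrinks the gap $b - a$ but does not affect the asymptotic conclusion. After this step the perturbed low-frequency parts have exactly matching moments up to order $\sqrt{\log n}$.

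Finally, invoke Valiant's central moment-matching lemma: once the low-frequency moments match up to order $O(\sqrt{\log n})$ and every Poisson parameter $r p_i$ is at most $r/k$, the total variation between the Poissonized fingerprints is at most $r \cdot 2^{O(\sqrt{\log n})}/k$, which is $o(1)$ precisely in the range $r \le k/(1000 \cdot 2^{4\sqrt{\log n}})$. The proof of this last step is the main obstacle: it is not routine and requires the full Chebyshev-polynomial approximation / Poisson-CLT machinery of \cite{Valiant:2011} to turn matched moments into vanishing fingerprint TV distance with the doubly-exponential dependence $2^{O(\sqrt{\log n})}$ that appears in the statement.
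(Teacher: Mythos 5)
First, a point about scope: the paper does not prove this lemma at all --- it is imported verbatim, with citation, from \cite{Valiant:2011} (it is the ``low-frequency blindness'' theorem of that work), so there is no in-paper proof to compare against. Your proposal should therefore be judged as a standalone reconstruction of the external argument, and as such it has a genuine gap that you yourself flag: the entire technical content (turning closeness of the low-frequency parts into $o(1)$ total-variation distance between the induced sample/fingerprint distributions, with the $2^{O(\sqrt{\log n})}$ dependence) is deferred to ``the full machinery of \cite{Valiant:2011}.'' A citation-plus-outline of that machinery is exactly what the paper already does by stating the lemma with a reference, so the proposal does not add a proof.

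Beyond that, two concrete steps in your sketch do not hold up. (i) The middle step --- perturb $p_L^{\pm}$ within total variation $\beta$ so that the low-frequency power sums $\sum_i p_i^j$ match \emph{exactly} for all $j\lesssim\sqrt{\log n}$ --- is asserted as a ``linear feasibility problem whose room comes exactly from $\beta$,'' but no argument is given that a TV-$\beta$ perturbation suffices to match $\Theta(\sqrt{\log n})$ prescribed moments, and in general it does not. This is also not how the cited proof spends the continuity budget: Valiant's argument runs through the canonical (fingerprint) tester and the Wishful Thinking theorem, bounding discrepancies of \emph{fingerprint expectations}, and uses the $\beta$ of weak continuity to mix a common low-frequency component into both distributions so that the fingerprint expectations are large relative to the discrepancies; exact moment matching of the underlying distributions is the mechanism of the later Valiant--Valiant CLT-based lower bounds, not of this lemma. (ii) Your final bound ``TV $\le r\cdot 2^{O(\sqrt{\log n})}/k$, which is $o(1)$ for $r\le k/(1000\cdot 2^{4\sqrt{\log n}})$'' has lost the factor $\beta$ entirely, whereas the lemma's sample threshold is $k\cdot\beta/(1000\cdot 2^{4\sqrt{\log n}})$; the linear dependence on $\beta$ is precisely what the mixing construction buys, and an argument that cannot account for it is not reconstructing the right proof. (A small further slip: the $\alpha$-loss from weak continuity is not a harmless shrinking of the gap --- it is exactly why the conclusion is phrased with thresholds $b-\epsilon$ and $a+\epsilon$ rather than $b$ and $a$.)
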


For aDP, our property is $\pi((P,Q))=\delta_{\epsilon}$, which is $((1+e^{\epsilon})\alpha, \alpha)$-weakly-continuous. We can now prove our lower bound. 

\begin{proof}[Proof of Theorem \ref{aDPLB}]
Let $P_0=Q_0$ be the uniform distribution on $\{\psi,\omega\}$. Let \[P_1 = \left(\frac{e^{\epsilon}}{2}+\delta\right)\chi_{\psi}+\left(1-\frac{e^{\epsilon}}{2}-\delta\right)\chi_{\omega}\] and \[Q_1 = \left(\frac{e^{\epsilon}}{2}+\delta+\alpha\right)\chi_{\psi}+\left(1-\frac{e^{\epsilon}}{2}-\delta-\alpha\right)\chi_{\omega}.\] Then, $(P_0,P_1)$ is $(\epsilon, \delta)$-aDP and $(Q_0,Q_1)$ is $\alpha$-far from $(\epsilon,\delta)$-aDP. Now, \[D_{\text{KL}}(P_1|Q_1) = \left(\frac{e^{\epsilon}}{2}+\delta+\alpha\right)\ln\left(1+\frac{\alpha}{\frac{e^{\epsilon}}{2}+\delta}\right)+\left(\frac{e^{\epsilon}}{2}-\delta-\alpha\right)\ln\left(1-\frac{\alpha}{\frac{e^{\epsilon}}{2}-\delta}\right)\lesssim \alpha^2.\] By the same argument as Theorem \ref{nogopure} we have $r = \Omega\left(\frac{1}{\alpha^2}\right)$.

Suppose $[n]$ is a disjoint union of the sets $R_1,R_2$ and $R_3$, all of which have cardinality $n/3$. Let $a=\frac{2\delta+\alpha}{3}, b=2a, \eta=\frac{\delta-\alpha}{3}$ so $a+\eta = \delta$ and $b-\eta=\delta+\alpha$. 
Let 
\begin{align*}
P_1=P_0=Q_0&=\frac{3a}{n}\chi_{R_1}&+\frac{3(1-a)}{n}\chi_{R_2}&&\\
Q_1&=&\frac{3(1-a)}{n}\chi_{R_2}&&+\frac{3a}{n}\chi_{R_3}.
\end{align*} 
Now, for $(P_0, P_1)$, $\delta_{\epsilon}\le a$ and for $(Q_0,Q_1)$, $\delta_{\epsilon}\ge 2a=b$. Since the distributions agree on any index with probability greater than $\frac{3a}{n}$, Lemma \ref{lowfreqbias} implies that no tester can distinguish between $\delta_{\epsilon}\ge b-\eta =\delta+\alpha$ and $\delta_{\epsilon}\le a+\eta=\delta$ with less than $\frac{n}{3a}(1+e^{\epsilon})\eta = \frac{3n(\delta-\alpha)}{2\delta+\alpha} =\Omega(n)$ samples.
\end{proof}

At first glance, Theorem \ref{aDPLB} doesn't look too bad. We should expect the sample complexity to scale like $1/\alpha^2$ since we need to have enough samples to detect the bad events. Our concern is the size of $\alpha$. If we would like $\alpha$ to be the same order as $\delta$, then our query complexity must scale as $\frac{1}{\delta^2}$. As we typically require $\delta$ to be extremely small (i.e. $\delta\approx 10^{-8}$), $\frac{1}{\delta}$ may be infeasibly large. If we are willing to accept somewhat larger $\delta$, then $\frac{1}{\delta^2}$ may be reasonable.

\begin{algorithm}[t]
   \caption{aDP Property Tester}
   \label{alg:adptest}
\begin{algorithmic}
   \STATE {\bfseries Input:} Universe size $n$, $\epsilon, \delta, \alpha>0$
   \STATE $\lambda=\max\{\frac{4n(1+e^{2\epsilon})}{\alpha^2}, \frac{12(1+e^{2\epsilon})}{\alpha^2}\}$
   \STATE Sample $r\sim $Poi$(\lambda)$
   \STATE Sample $D_0\sim P^r$, $D_1\sim Q^r$
   \FOR{$i\in[m]$}
   \STATE $x_i = $ number of $i$'s in $D_0$
   \STATE $y_i =$ number of $i$'s in $D_1$
   \STATE $z_i = \frac{1}{r}(x_i-e^{\epsilon}y_i)$
   \ENDFOR
   \STATE $z = \sum_{i=1}^r \max\{0,z_i\}$
   \IF{$z<\delta+\alpha$}
   \STATE {\bfseries Output:} ACCEPT
   \ELSE
   \STATE {\bfseries Output:} REJECT
   \ENDIF
\end{algorithmic}
\end{algorithm}

We now turn our attention to Algorithm \ref{alg:adptest}, a simple algorithm for testing aDP with query complexity $O(\frac{4n(1+e^{2\epsilon})}{\alpha^2})$. Its sample complexity matches the lower bound in Theorem \ref{aDPLB} in $\alpha$ when $n$ is held constant and in $n$ when $\alpha$ is held constant. We are going to use a trick called \emph{Poissonisation} to simplify the proof of soundness and completeness, as in \cite{Batu:2013}. Suppose that, rather than taking $r$ samples from $P$, the algorithm first samples $r_1$ from a Poisson distribution with parameter $\lambda=r$ and then takes $r_1$ samples from $P$. Let $X_i$ be the random variable corresponding to the number of times the element $i\in[n]$ appears in the sample from $P$. Then $X_i$ is distributed identically to the Poisson distribution with parameter $\lambda=p_ir$ and all the $X_i$'s are mutually independent. Similarly, we sample $r_2$ from a Poisson distribution with parameter $r$ and then take $r_2$ samples from $Q$. Let $Y_i$ be the the number of times $i$ appears in the sample from $Q$, so $Y_i$ is Poisson with $\lambda=q_ir$ and the $Y_i$ are independent.

\begin{thm}[Upper bound]\label{aDPUB}
Let $\epsilon,\delta,\alpha>0$. Algorithm \ref{alg:adptest} is a $(\epsilon, \delta)$-aDP property tester with proximity parameter $2\alpha$ and sample complexity $O(\frac{4n(1+e^{2\epsilon})}{\alpha^2})$. 
\end{thm}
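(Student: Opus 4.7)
The plan is to exploit the Poissonization setup in Algorithm \ref{alg:adptest}: after drawing $r\sim\mathrm{Poi}(\lambda)$ and sampling $r$ times from each of $P_0,P_1$, the counts $x_i,y_i$ become mutually independent with $x_i\sim\mathrm{Poi}(p_i\lambda)$ and $y_i\sim\mathrm{Poi}(q_i\lambda)$. Hence each $z_i=\tfrac{1}{\lambda}(x_i-e^\epsilon y_i)$ has mean $p_i-e^\epsilon q_i$ and variance $\tfrac{p_i+e^{2\epsilon}q_i}{\lambda}$, and the $z_i$ are independent across $i$. The key identity I would exploit is
\[\delta^{\mathcal{A}}_\epsilon \;=\; \max_E P_0(E)-e^\epsilon P_1(E) \;=\; \sum_{i=1}^n \max\{0,\,p_i-e^\epsilon q_i\} \;=\; \sum_{i=1}^n\max\{0,\,\mathbb{E}[z_i]\},\]
so $z$ is a (biased) plug-in estimator for $\delta^{\mathcal{A}}_\epsilon$. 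The natural strategy is to compare $z$ to $\delta+\alpha$ and budget $\alpha/2$ each for the bias $\mathbb{E}[z]-\delta^{\mathcal{A}}_\epsilon$ and for the fluctuation $|z-\mathbb{E}[z]|$.

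For the bias, I rewrite $\max\{0,t\}=\tfrac{t+|t|}{2}$, so term-by-term
\[0 \;\leq\; \mathbb{E}[\max\{0,z_i\}] - \max\{0,\mathbb{E}[z_i]\} \;=\; \tfrac{1}{2}\bigl(\mathbb{E}|z_i|-|\mathbb{E}[z_i]|\bigr) \;\leq\; \tfrac{1}{2}\sqrt{\mathrm{Var}(z_i)},\]
where the last inequality comes from $\mathbb{E}|z_i|\leq\sqrt{\mathbb{E}[z_i^2]}\leq|\mathbb{E}[z_i]|+\sqrt{\mathrm{Var}(z_i)}$. Summing over $i$ and applying Cauchy--Schwarz collapses the $n$ terms:
\[0 \;\leq\; \mathbb{E}[z]-\delta^{\mathcal{A}}_\epsilon \;\leq\; \tfrac{1}{2\sqrt{\lambda}}\sum_{i=1}^n\sqrt{p_i+e^{2\epsilon}q_i} \;\leq\; \tfrac{1}{2}\sqrt{\tfrac{n(1+e^{2\epsilon})}{\lambda}},\]
which is at most $\alpha/2$ once $\lambda \gtrsim n(1+e^{2\epsilon})/\alpha^2$, matching the first term in the definition of $\lambda$.

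For the fluctuation, the clean observation is that $t\mapsto\max\{0,t\}$ is $1$-Lipschitz, so $\mathrm{Var}(\max\{0,z_i\})\leq\mathrm{Var}(z_i)$ (this avoids the useless $(\mathbb{E}[z_i])^2$ contributions one gets from bounding the second moment directly). Independence then gives $\mathrm{Var}(z)\leq\sum_i\mathrm{Var}(z_i)=(1+e^{2\epsilon})/\lambda$, and Chebyshev yields $|z-\mathbb{E}[z]|\leq\alpha/2$ with probability $\geq 2/3$ as soon as $\lambda\gtrsim(1+e^{2\epsilon})/\alpha^2$, the second term in the definition of $\lambda$.

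Combining: in the completeness case $\delta^{\mathcal{A}}_\epsilon\leq\delta$, so $z\leq\mathbb{E}[z]+\alpha/2\leq\delta^{\mathcal{A}}_\epsilon+\alpha\leq\delta+\alpha$ with probability $\geq 2/3$ and the tester accepts; in the soundness case ($2\alpha$-far), $\delta^{\mathcal{A}}_\epsilon>\delta+2\alpha$, and since the bias is non-negative $\mathbb{E}[z]>\delta+2\alpha$, giving $z>\delta+3\alpha/2>\delta+\alpha$ with probability $\geq 2/3$, so the tester rejects. I expect the main obstacle to be the bias analysis: a naive term-by-term bound (e.g.\ insisting each bias be at most $\alpha/n$) would force $\lambda$ to scale like $n^2$, and it is the Cauchy--Schwarz step above, together with the fact that the sum of variances is controlled by the total masses $\sum_i p_i=\sum_i q_i=1$, that brings the dependence down to the claimed linear-in-$n$ rate.
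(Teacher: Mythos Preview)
Your proposal is correct and follows the same overall architecture as the paper's proof: Poissonize so that the counts are independent Poissons, bound the bias of $z$ termwise by $\sqrt{\mathrm{Var}(z_i)}$ and then collapse via Cauchy--Schwarz to $O(\sqrt{n(1+e^{2\epsilon})/\lambda})$, bound $\mathrm{Var}(z)$ by the $1$-Lipschitz property of $t\mapsto\max\{0,t\}$, and finish with Chebyshev.

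The one genuine difference is in the per-term bias bound. You use the identity $\max\{0,t\}=\tfrac{t+|t|}{2}$ together with $\mathbb{E}|z_i|\le\sqrt{\mathbb{E}[z_i^2]}\le|\mathbb{E}[z_i]|+\sqrt{\mathrm{Var}(z_i)}$, which handles both signs of $\mathbb{E}[z_i]$ in one line and yields a constant $\tfrac12$. The paper instead splits into the cases $\mathbb{E}[Z_i]\le 0$ and $\mathbb{E}[Z_i]>0$, treating the former via a tail-integral computation $\mathbb{E}[\max\{0,Z_i\}]=\int_0^\infty\mathbb{P}(Z_i>x)\,dx$ bounded by Chebyshev. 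Your route is shorter and gives a slightly better constant; the paper's route is more hands-on but arrives at the same $\sum_i\sqrt{\mathrm{Var}(z_i)}$ bound. The variance step (via the independent-copy/Lipschitz argument) and the concluding Chebyshev step are the same in both proofs. One small point worth keeping explicit, as you do: under Poissonization the counts have parameter $p_i\lambda$, so normalizing by $\lambda$ rather than the random $r$ is what makes the mean and variance formulas exact.
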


\begin{proof} Let $p_i=P(i)$ and $q_i=Q(i)$.
Let $Z_i = \frac{1}{r}\left(X_i-e^{\epsilon}Y_i\right)$ so $\mathbb{E}[Z_i]=p_i-e^{\epsilon}q_i$ and Var$(Z_i)\le \frac{p_i+e^{2\epsilon}q_i}{r}$. Note also that $(P,Q)$ is $(\epsilon, \delta)$-DP if $\Delta := \sum_{i=1}^n\max\{0,\mathbb{E}[Z_i]\}\le \delta$. First note that $\mathbb{E}[Z]\ge\Delta$.
If $\mathbb{E}[Z_i]\le0$ then \begin{align*}
\mathbb{E}[\max\{0,Z_i\}] &= \int_0^{\infty}\mathbb{P}(\max\{0,Z_i\}> x) dx\\
&= \int_0^{\infty}\mathbb{P}(Z_i> x) dx\\
&\le \int_0^{\infty}\mathbb{P}(Z_i-\mathbb{E}[Z_i]> x-\mathbb{E}[Z_i]) dx\\
&\le \int_0^{\infty}\min\{1, \frac{VarZ_i}{(x-\mathbb{E}[Z_i])^2}\}dx\\
&= \int_0^{\sqrt{VarZ_i}+\mathbb{E}[Z_i]}1dx+\int_{\sqrt{VarZ_i}+\mathbb{E}[Z_i]}^{\infty}\frac{VarZ_i}{(x-\mathbb{E}[Z_i])^2}dx\\
&= \sqrt{VarZ_i}+\mathbb{E}[Z_i]+\sqrt{VarZ_i}\\
&\le \sqrt{\frac{p_i+e^{2\epsilon}q_i}{r}}
\end{align*}
If $\mathbb{E}[Z_i]>0$ then $\mathbb{E}[\max\{0,X_i\}]\le \mathbb{E}[Z_i]+\sqrt{\frac{p_i+e^{2\epsilon}q_i}{r}} = p_i-e^{\epsilon}q_i+\sqrt{\frac{p_i+e^{2\epsilon}q_i}{r}}$. Therefore, \[\mathbb{E}[Z] \le \Delta + \sum_{i=1}^n \sqrt{\frac{p_i+e^{2\epsilon}q_i}{r}}\le \Delta+\sqrt{\frac{n}{r}}(1+e^{2\epsilon})\]
Now, let $Z_i'$ be an independent copy of $Z_i$ then 
\begin{align*}
Var[\max\{0,Z_i\}] &= \frac{1}{2}\mathbb{E}[(\max\{0,Z_i\}-\max\{0,Z_i'\})^2]\\
&= \int_0^{\infty} \mathbb{P}((\max\{0,Z_i\}-\max\{0,Z_i'\})^2\ge x)dx\\
&\le \int_0^{\infty} \mathbb{P}((Z_i-Z_i')^2\ge x)dx\\
&= VarZ_i = \frac{p_i+e^{2\epsilon}q_i}{r}.
\end{align*}
So $Var Z\le \frac{1+e^{2\epsilon}}{r}$. Therefore, 
\begin{align*}
\mathbb{P}(|Z-\Delta|\ge \alpha) &\le \mathbb{P}(|Z-\mathbb{E}[Z]|+|\mathbb{E}[Z]-\Delta|\ge \alpha)\\
&\le \mathbb{P}(|Z-\mathbb{E}[Z]|\ge \alpha-\sqrt{\frac{n}{r}}(1+e^{2\epsilon}))\\
&\le \frac{VarZ}{(\alpha-\sqrt{\frac{n}{r}}(1+e^{2\epsilon}))^2}\\
&\le \frac{1+e^{2\epsilon}}{r(\alpha-\sqrt{\frac{n}{r}}(1+e^{2\epsilon}))^2},
\end{align*}
which is less than $1/3$ if $r\ge\max\{\frac{4n(1+e^{2\epsilon})^2}{\alpha^2}, \frac{12(1+e^{2\epsilon})}{\alpha^2}\}$.
\end{proof}

\section{Full Information (FI) Setting}\label{fullinfo}

The situation is substantially rosier if we have side-information. Although there are some realistic scenarios where one may have \emph{trusted} side-information, we will focus on \emph{untrusted} side-information. In particular, we allow our property tester to REJECT simply because the provided side-information is inaccurate. We will see that the untrusted side-information can still be useful since verifying information is often easier than estimating it. 

The usefulness of side-information in property testing is informally lower bounded by how easy it is to generate the same information, and how much the information tells us about the property. Proposition \ref{meannogood} below demonstrates this idea when the side information is the means of the distributions. The means of the distributions $(P_0,P_1)$ are efficient to estimate, but do not tell us very much about whether or not the privacy guarantee is satisfied. If $\mathcal{A}$ is an unbiased estimate for a function $f$, then the following proposition states that \emph{knowing the function the black-box is computing does not help in verifying pDP}. 

\begin{prop}\label{meannogood}
Let $\alpha, \epsilon>0$ and suppose the side information is $(a,b)$, which are purported to be the means of $P_0$ and $P_1$. If there exists a $\Xi$-private algorithm $(P_0,P_1)$ supported on $[n-1]$ with $\mathbb{E}(P_0)=a$ and $\mathbb{E}(P_1)=b$, then no $\epsilon$-pDP property tester with side information $(a, b)$ and proximity parameter $\alpha$ has finite query complexity.
\end{prop}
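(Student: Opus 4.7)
The plan is to mirror the construction in Theorem \ref{nogopure}: build an $\epsilon$-pDP algorithm $\mathcal{A}$ and an algorithm $\mathcal{B}$ that is $\alpha$-far from $\epsilon$-pDP, both advertising the same means $(a,b)$, and with sample distributions that can be made arbitrarily close in total variation. Because both algorithms are consistent with the side-information, that side-information adds no distinguishing power, and Lemma \ref{lecampropertytesting} forces any tester's query complexity to diverge.

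Let $\mathcal{A} = (P_0, P_1)$ be the hypothesized $\epsilon$-pDP pair supported on $[n-1]$ with means $(a,b)$. Fix a small $\eta>0$ and define $\mathcal{B} = (Q_0, P_1)$ by
$$Q_0 = (1-\eta) P_0 + \eta\,\nu_0, \qquad \nu_0 = \frac{n-a}{n-1}\,\chi_1 + \frac{a-1}{n-1}\,\chi_n.$$
Since $a \in [1, n-1]$, $\nu_0$ is a valid two-point distribution with mean $a$, so $\mathbb{E}[Q_0] = (1-\eta)a + \eta a = a$, matching the advertised side-information. However $Q_0(n) = \eta(a-1)/(n-1) > 0$ while $P_1(n) = 0$ (because $P_1$ is supported on $[n-1]$), so $D_\infty(Q_0, P_1) = \infty$ and $\mathcal{B}$ is $\alpha$-far from $\epsilon$-pDP for every finite $\alpha$. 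The degenerate case $a=1$ forces $P_0 = \chi_1$ and is handled symmetrically by perturbing $P_1$ at $b$; if both $a = b = 1$ the side-information alone determines the algorithm and the proposition is vacuous.

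By construction $\|P_0 - Q_0\|_{\mathrm{TV}} \le \eta$ and $P_1$ is unchanged, so by subadditivity of TV distance over products
$$\|P_0^r \times P_1^r - Q_0^r \times P_1^r\|_{\mathrm{TV}} \le r\,\eta.$$
Applied to $\mathcal{A}$ and $\mathcal{B}$ (both declaring identical side-information $(a,b)$, so the side-information cannot break the Le Cam symmetry), Lemma \ref{lecampropertytesting} requires $r\eta \ge 1/3$ for any tester with query complexity $2r$. Since $\eta$ may be taken arbitrarily small, no finite $r$ suffices. The main obstacle is engineering a perturbation that simultaneously preserves the mean $a$, injects a ratio blow-up at a support point disjoint from $P_1$, and stays close in TV distance to $P_0$; routing $\eta$-mass through the two-point distribution $\nu_0$ on $\{1,n\}$ accomplishes all three at once, with the mass at $1$ counterbalancing the mean-inflating mass at $n$.
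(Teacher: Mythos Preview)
Your argument is correct and follows the same blueprint as the paper: perturb one coordinate of the private pair by routing a small amount of mass to the element $n$, which lies outside the support of the other distribution, thereby forcing an infinite max-divergence while remaining arbitrarily close in total variation, and then invoke Lemma~\ref{lecampropertytesting}. The paper's construction is simpler: it just sets $Q_0 = (1-e^{-A})P_0 + e^{-A}\chi_n$ with no mean correction. This suffices because the soundness clause in Definition~\ref{proptesting} requires rejection whenever the algorithm is $\alpha$-far, \emph{regardless} of whether the side information is accurate; only completeness demands accurate side information, and completeness is invoked solely for $\mathcal{A}$. Your two-point mixture $\nu_0$ elegantly preserves the advertised mean, but that extra care is not needed for the Le Cam bound to apply---the tester receives the fixed pair $(a,b)$ in both scenarios and must distinguish on samples alone.

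One minor correction: the case $a=b=1$ is not vacuous. The side information then pins down the unique private candidate $(\chi_1,\chi_1)$, but the tester must still reject every $\alpha$-far algorithm, and the perturbation $((1-\eta)\chi_1+\eta\chi_n,\chi_1)$ defeats any finite-query tester by exactly the same argument.
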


The requirement that a $\Xi$-private algorithm with the right side information exists is necessary. If no such algorithm exists, then the tester should always REJECT and requires no queries. Under the assumption that such an algorithm exists, it is reasonable to assume that there exists an algorithm with slightly smaller support. 

\begin{proof} 
Let $A>0$ and $\mathcal{A}=(P_0,P_1)$ be the algorithm promised. That is, $(P_0,P_1)$ is $\Xi$-private and supported on $[n-1]$. Let $Q_0=(1-e^{-A})P_0+e^{-A}\chi_n$ and $Q_1=P_1$ so $(Q_0,Q_1)$ is $\alpha$-far from pDP. Now, $\|P_0\times P_1-Q_0\times Q_1\|_{\text{TV}}\le e^{-A}$ so it requires $e^{A}\to\infty$ samples to distinguish between the two distributions.
\end{proof}

For the remainder of the paper we focus on what we call the \emph{full information} setting: we are given sample access to $\mathcal{A}$ and a distribution $Q_D$ for each database $D$. This case may seem optimistic, however we will find that the lower bounds obtained are still very large. The lower bounds in this optimistic setting also hold for more realistic setting when the verifier has less information. In contrast to the mean, this side information is very informative about the privacy of the algorithm. It is also difficult to generate based on samples. We can \emph{estimate} it using $\Theta(\frac{n}{\alpha^2})$ \citep{Chan:2014} queries of each database, where $\alpha$ is the accuracy in TV-distance. However, we already know that the only privacy notion for which an estimate is sufficient is aDP. 

In Algorithm \ref{alg:FIaDP} we use an identity tester rather than density estimation to obtain a lower sample complexity. An identity tester is a property tester $T$ that takes as input a description of the discrete distribution $P$ and $m$ samples from a distribution $Q$. If $P=Q$ then the tester ACCEPTS with probability at least 2/3 and if $\|P-Q\|_{\text{TV}}\ge\alpha$ then the tester REJECTS with probability at least 2/3. It is also known that testing identity to a known distribution requires asymptotically less samples than estimating an unknown distribution. 

\begin{algorithm}[t]
   \caption{aDP FI Property Tester}
   \label{alg:FIaDP}
\begin{algorithmic}
   \STATE {\bfseries Input:} Universe size $n$, $\epsilon, \delta, \alpha>0$, $(Q_0,Q_1)$ and identify tester $T$ with sample complexity $r$
   \IF{$(Q_0,Q_1)$ is not $(\epsilon, \delta)$-aDP}
   \STATE {\bfseries Output:} REJECT
   \ELSE
   \IF{$T(Q_0,x\sim P_0^r)=$ REJECT or $T(Q_1,x\sim P_1^r)=$ REJECT}
   \STATE {\bfseries Output:} REJECT
   \ELSE
   \STATE {\bfseries Output:} ACCEPT
   \ENDIF
   \ENDIF	
   \end{algorithmic}
\end{algorithm}

\begin{prop}\label{FIaDP}
There exists a identity tester $T$ such that Algorithm \ref{alg:FIaDP} is a $(\epsilon, \delta)$-aDP FI property tester with query complexity $O\left(\frac{\sqrt{n}}{\alpha^2}\right)$ and proximity parameter $\alpha$.
\end{prop}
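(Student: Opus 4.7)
The plan is to invoke a known near-optimal identity tester for discrete distributions and then use Lemma \ref{approxTV} to connect TV-closeness of the candidate side information to closeness in the aDP parameter. Recall that there exist identity testers (e.g.\ Valiant--Valiant, or Acharya--Daskalakis--Kamath) which, given a full description of a distribution $Q$ on $[n]$ and sample access to an unknown $P$ on $[n]$, distinguish $P=Q$ from $\|P-Q\|_{\text{TV}} \ge \alpha'$ using $O(\sqrt{n}/\alpha'^2)$ samples. Standard boosting lets us shrink the error probability from $1/3$ to $1/6$ while only paying a constant factor, so we can assume each of the two identity tests in Algorithm~\ref{alg:FIaDP} errs with probability at most $1/6$. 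Set the proximity parameter for $T$ to $\alpha' = \alpha/(1+e^\epsilon)$; then both tests together use $O(\sqrt{n}(1+e^\epsilon)^2/\alpha^2) = O(\sqrt{n}/\alpha^2)$ samples (treating $\epsilon$ as a constant), matching the claimed bound.

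For completeness, suppose $(P_0,P_1)$ is $(\epsilon,\delta)$-aDP and $Q_i = P_i$ for $i=0,1$. Then $(Q_0,Q_1)$ is itself $(\epsilon,\delta)$-aDP, so the first check passes deterministically. Each identity test is called with $P_i = Q_i$, so each returns ACCEPT with probability at least $5/6$; a union bound gives overall ACCEPT probability at least $2/3$.

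For soundness, suppose $(P_0,P_1)$ is $\alpha$-far from $(\epsilon,\delta)$-aDP. Either $(Q_0,Q_1)$ is not $(\epsilon,\delta)$-aDP, in which case the algorithm REJECTs deterministically on the first check; or $(Q_0,Q_1)$ is $(\epsilon,\delta)$-aDP. In the latter case, Lemma~\ref{approxTV} (applied with the roles of $P$ and $Q$ swapped) says that if we had $\|P_0-Q_0\|_{\text{TV}} \le \alpha'$ and $\|P_1-Q_1\|_{\text{TV}} \le \alpha'$ simultaneously, then $(P_0,P_1)$ would be $(\epsilon, \delta+(1+e^\epsilon)\alpha')$-aDP, i.e.\ only $(1+e^\epsilon)\alpha' = \alpha$-far from $(\epsilon,\delta)$-aDP, contradicting the assumption. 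Hence at least one of $\|P_i-Q_i\|_{\text{TV}}$ exceeds $\alpha'$, so the corresponding identity tester rejects with probability at least $5/6 \ge 2/3$.

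The only real subtlety is the parameter-matching in the soundness step: we have to be careful that the notion of ``$\alpha$-far'' used in Definition~\ref{proptesting} and Table~\ref{metrictable} (a deviation in the smallest valid $\delta_\epsilon$) lines up with the $(1+e^\epsilon)\alpha'$ slack produced by Lemma~\ref{approxTV}. Given that alignment, the whole argument is a straightforward reduction to identity testing with a boosted failure probability, and the bottleneck $O(\sqrt n/\alpha^2)$ comes entirely from the off-the-shelf identity tester, which is where any potential improvement would have to come from.
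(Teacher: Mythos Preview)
Your proposal is correct and follows essentially the same route as the paper: invoke an optimal identity tester (the paper cites \cite{On:2014}), boost its failure probability by a constant factor, check that $(Q_0,Q_1)$ is $(\epsilon,\delta)$-aDP, and use Lemma~\ref{approxTV} for soundness. The only cosmetic difference is that you set the tester's tolerance to $\alpha' = \alpha/(1+e^\epsilon)$ so the final proximity parameter is exactly $\alpha$, whereas the paper runs the tester at tolerance $\alpha$ and in its proof actually establishes soundness for proximity $(1+e^{\epsilon})\alpha$; your version is arguably the cleaner match to the stated proposition.
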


This is our first, and only, sublinear query complexity property tester for privacy. Since closeness in TV-distance implies closeness in aDP, we only need to check that the true distributions are close to $(Q_0,Q_1)$ and that $(Q_0,Q_1)$ is $(\epsilon, \delta)$-aDP. The difficult part is testing closeness of the distributions, for which we borrow from \cite{On:2014}. 

\begin{proof} 
\cite{On:2014} proved that there exists a property tester $T$ that takes as input a description of the discrete distribution $P$ and $O(\frac{\sqrt{n}}{\alpha^2})$ samples from an distribution $Q$. If $P=Q$ then the tester ACCEPTS with probability at least 2/3 and if $\|P-Q\|_{\text{TV}}\ge\alpha$ then the tester REJECTS with probability at least 2/3. If we increase the sample complexity of the tester $T$ by a constant factor then we can replace $2/3$ with $\sqrt{2/3}$.

To prove completeness, suppose $(P_0,P_1)=(Q_0,Q_1)$ and $(Q_0,Q_1)$ is $(\epsilon, \delta)$-aDP. Since $T$ ACCEPTS on both pairs $(Q_0,P_0)$ and $(Q_1,P_1)$ with probability at least $\sqrt{2/3}$, it ACCEPTS both with probability at least $2/3$. To prove soundness, suppose $(P_0,P_1)$ is\\ $(1~+~e^{\epsilon})\alpha$-far from $(\epsilon, \delta)$-aDP. Assume $(Q_0,Q_1)$ is $(\epsilon, \delta)$-aDP because otherwise the tester REJECTS. It must be the case that either $\|Q_0-P_0\|_{\text{TV}}\ge\alpha$ or $\|Q_1-P_1\|_{\text{TV}}\ge\alpha$ because otherwise $(P_0,P_1)$ would be $(\epsilon, (1+e^{\epsilon})\alpha)$-aDP by Lemma \ref{approxTV}. Thus, with probability at least $2/3$ either $T(Q_0,x\sim P_0)=$ REJECT or $T(Q_1,x\sim P_1)=$ REJECT.
\end{proof}

Next, we show that for pDP, the side information allows us to obtain a finite query complexity property tester. The side-information gives us an easy way to switch from a worst-case analysis to input specific upper bounds. We argue that \[\beta = \min_E\min_D P_D(E),\] where the first $\min$ is over events $E$ and the second is over databases $D$, is the crucial quantity in understanding verifiability in the full information setting. 
 
The lower bound proofs in the previous section all proceeded by finding two algorithms $\mathcal{A}$ and $\mathcal{B}$ that were close in TV-distance but had very different privacy parameters. The algorithms we chose all had one feature in common: the distributions $P_D$ contained very low probability events. This property allowed us to drive the denominator of $\frac{Q_D(E)}{P_D(E)}$ to 0, and hence the privacy loss to $\infty$, while remaining close in TV-distance. This method works equally well in the full-information setting \emph{if} low probability events exist in the distributions $Q_D$. 

If the distribution $Q_D$ does not have low probability events, then any distributions close to $P_D$ must have bounded privacy parameters. To see this, suppose $(Q_0, Q_1) = (U,U)$ where $U$ is the uniform distribution $U$ on $\{\psi,\omega\}$. We can establish in approximately $\frac{1}{\alpha^2}$ samples whether or not $P_0$ and $P_1$ are both within TV-distance $\alpha$ of uniform. If not, then we REJECT. If so, then the worst case for privacy is $P_0=(1/2-\alpha)\chi_{\psi}+(1/2+\alpha)\chi_{\omega}$ and $P_1=(1/2+\alpha)\chi_{\psi}+(1/2-\alpha)\chi_{\omega}$. However, the increase in the pDP parameter from $(Q_0, Q_1)$ to $(P_0,P_1)$ is bounded by $\ln\frac{1/2+\alpha}{1/2-\alpha}\approx \alpha$.

Algorithm~\ref{alg:pDPFI} is a full information property tester for pDP. Note that this algorithm is not sublinear in $n$ since $\beta<\frac{1}{n}$.

\begin{algorithm}[t]
   \caption{pDP FI Property Tester}
   \label{alg:pDPFI}
\begin{algorithmic}
   \STATE {\bfseries Input:} Universe size $n$, $\epsilon, \alpha>0$, $(Q_0, Q_1)$
   \STATE $\lambda=\frac{\ln n}{\alpha^2\beta^2}$
   \STATE Sample $r\sim $Poi$(\lambda)$
   \STATE Sample $D_0\sim P_0^r$, $D_1\sim P_1^r$
   \FOR{$i\in[m]$}
   \STATE $x_i = $ number of $i$'s in $D_0$
   \STATE $y_i =$ number of $i$'s in $D_1$
   \ENDFOR
   \STATE $\hat{\epsilon} = \sup_i \max\{\ln\frac{x_i}{y_i}, \ln\frac{y_i}{x_i}\}$
   \IF{$\hat{\epsilon}>\epsilon+2\alpha$}
   \STATE {\bfseries Output:} REJECT
   \ELSE
   \IF{$\forall i \;\; e^{-\alpha}\le \frac{x_i}{(Q_0)_i}\le e^{\alpha}$ and $e^{-\alpha}~\le~\frac{y_i}{(Q_1)_i}~\le~e^{\alpha}$}
   \STATE {\bfseries Output:} ACCEPT
   \ELSE 
   \STATE {\bfseries Output:} REJECT
   \ENDIF
   \ENDIF
\end{algorithmic}
\end{algorithm}

\begin{thm}[pDP upper bound]\label{FIpureUB}
Let $\epsilon>0$ and $\alpha>0$. Algorithm \ref{alg:pDPFI} is an $\epsilon$-aDP FI property tester with proximity parameter $10\alpha$ and query complexity $O\left(\frac{\ln n}{\alpha^2\beta^2}\right).$ 
\end{thm}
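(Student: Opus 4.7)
\textbf{Proof proposal for Theorem \ref{FIpureUB}.} The plan is to use Poissonization together with a Chernoff tail bound for Poisson random variables to show that, with $\lambda=\Theta(\ln n/(\alpha^2\beta^2))$ samples per database, every empirical frequency $x_i/\lambda$ and $y_i/\lambda$ lies within a multiplicative factor $e^{\pm\alpha}$ of its true value $(P_0)_i$ and $(P_1)_i$, simultaneously for all $i\in[n]$. Once this uniform multiplicative control is in hand, both completeness and soundness follow from short direct computations. I interpret the ratios $x_i/(Q_0)_i$ and $y_i/(Q_1)_i$ in check~(b) as the natural empirical-versus-claimed probability ratios $(x_i/\lambda)/(Q_0)_i$ and $(y_i/\lambda)/(Q_1)_i$.

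\emph{Concentration.} By Poissonization $x_i\sim\mathrm{Poi}(\lambda(P_0)_i)$ and $y_i\sim\mathrm{Poi}(\lambda(P_1)_i)$ are jointly independent, and since $(P_D)_i\ge\beta$ every Poisson mean is at least $\lambda\beta=\ln n/(\alpha^2\beta)$. The multiplicative Chernoff bound gives $\Pr[|x_i-\lambda(P_0)_i|\ge\alpha\lambda(P_0)_i]\le 2\exp(-c\alpha^2\lambda(P_0)_i)\le 2\exp(-c\ln n/\beta)$, which is at most $1/(6n)$ once the leading constant in $\lambda$ is chosen appropriately. Union-bounding over all $2n$ coordinates yields, with probability at least $2/3$, the ``good event'' on which $e^{-\alpha}\le x_i/(\lambda(P_0)_i)\le e^{\alpha}$ and the analogous bound for $y_i$ hold for every $i$ (converting the additive $(1\pm\alpha')$-style Chernoff bound into the multiplicative $e^{\pm\alpha}$ bound costs only a constant factor, absorbed into $\lambda$).

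\emph{Completeness and soundness.} Condition on the good event. For completeness, assume $(P_0,P_1)=(Q_0,Q_1)$ is $\epsilon$-pDP. Then $(x_i/\lambda)/(Q_0)_i=x_i/(\lambda(P_0)_i)\in[e^{-\alpha},e^{\alpha}]$ and similarly for $y_i/(\lambda(Q_1)_i)$, so check~(b) is satisfied. Using that $\epsilon$-pDP on discrete distributions reduces to the atomwise condition $(P_0)_i/(P_1)_i\le e^{\epsilon}$, I get $\ln(x_i/y_i)=\ln(x_i/\lambda)-\ln(y_i/\lambda)\le\ln((P_0)_i/(P_1)_i)+2\alpha\le\epsilon+2\alpha$, and by symmetry $\hat\epsilon\le\epsilon+2\alpha$, so check~(a) also holds and the algorithm ACCEPTs. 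For soundness, suppose $(P_0,P_1)$ is $10\alpha$-far from $\epsilon$-pDP; toward a contradiction assume it ACCEPTs, so $\hat\epsilon\le\epsilon+2\alpha$. Inverting the concentration in the reverse direction, $\ln((P_0)_i/(P_1)_i)\le\ln(x_i/y_i)+2\alpha\le\epsilon+4\alpha$ for every $i$, which makes $(P_0,P_1)$ into an $(\epsilon+4\alpha)$-pDP algorithm. This contradicts the $10\alpha$-far hypothesis, so the algorithm must REJECT.

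\emph{Main obstacle.} The central work is really bookkeeping: propagating constants through the Chernoff tail so that failure probability is $\le 1/(6n)$ after the union bound, and ensuring the additive $(1\pm\alpha')$-style bound from Chernoff is absorbable into the multiplicative $e^{\pm\alpha}$ bound demanded by checks~(a) and~(b). A smaller but genuine subtlety is that $\beta=\inf_D\min_i(P_D)_i$ is defined in terms of the unknown $P_D$, whereas the algorithm must set $\lambda$ from the side information; this can be handled by replacing $\beta$ with $\beta_Q=\min_D\min_i(Q_D)_i$, noting that if the true $\beta$ is substantially smaller than $\beta_Q$ then some empirical frequency will fail to concentrate near $(Q_D)_i$ and check~(b) will REJECT, which is the intended behaviour under inaccurate side information.
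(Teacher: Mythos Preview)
Your approach is the same as the paper's at the structural level: Poissonization, a multiplicative Chernoff/Hoeffding bound to get uniform $e^{\pm\alpha}$-control of all empirical frequencies, and then a chain of ratios. Your completeness argument is essentially identical to the paper's.

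The soundness argument, however, is organized differently. You argue directly from check~(a): on the good event, $\hat\epsilon\le\epsilon+2\alpha$ together with the concentration of $x_i,y_i$ around $\lambda(P_0)_i,\lambda(P_1)_i$ yields $(P_0)_i/(P_1)_i\le e^{\epsilon+4\alpha}$, so $(P_0,P_1)$ is $(\epsilon+4\alpha)$-pDP and the $10\alpha$-far hypothesis is violated. This never touches check~(b). The paper instead uses check~(b) essentially: from $e^{-\alpha}\le x_i/(Q_0)_i\le e^{\alpha}$ and an \emph{additive} Hoeffding bound (deviation scaled by $(Q_0)_i$), it first shows $(P_0)_i/(Q_0)_i\in[e^{-4\alpha},e^{4\alpha}]$ and then chains
\[
\frac{(P_0)_i}{(P_1)_i}=\frac{(P_0)_i}{(Q_0)_i}\cdot\frac{(Q_0)_i}{x_i}\cdot\frac{x_i}{y_i}\cdot\frac{y_i}{(Q_1)_i}\cdot\frac{(Q_1)_i}{(P_1)_i},
\]
picking up the larger constant that gives the $10\alpha$ proximity parameter.

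What each buys: your route is shorter and yields a tighter constant, but its concentration step (multiplicative Chernoff) needs each $\lambda(P_0)_i$ large, i.e.\ it relies on $\beta$ being the minimum of the \emph{true} $P_D$. The paper's route only needs $(Q_0)_i\ge\beta$, since its additive Hoeffding uses a deviation of order $(Q_0)_i\alpha$; this is why check~(b) is doing real work there and why the argument is robust to the algorithm's $\lambda$ being set from the side information. You correctly flag exactly this $\beta$ versus $\beta_Q$ issue in your final paragraph, so there is no gap; it is just worth noting that the paper's soundness is written precisely to avoid leaning on $\min_i(P_0)_i$.
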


\begin{proof}
We first show completeness. Suppose $(P_0,P_1)=(Q_0, Q_1)$ and $\mathcal{A}$ is $\epsilon$-pDP. By the multiplicative Hoeffding's inequality, \[\mathbb{P}\left(\frac{x_i}{(P_0)_i} \ge e^{\alpha} \text{ or } \frac{(P_0)_i}{x_i} \ge e^{\alpha}\right)\le e^{-2r(e^{\alpha}-1)^2\beta^2}+e^{-2r(1-e^{-\alpha})^2\beta^2}.\] Therefore, \[\mathbb{P}\left(\exists i \text{ s.t. }\;\;\frac{x_i}{(P_0)_i} \ge e^{\alpha} \text{ or } \frac{(P_0)_i}{x_i} \ge e^{\alpha}\right)\le n(e^{-2r(e^{\alpha}-1)^2\beta^2}+e^{-2r(1-e^{-\alpha})^2\beta^2})\le \frac{1}{6}.\]
Thus with probability $2/3$ we have for all $i$, $e^{-\alpha}\le \frac{x_i}{(P_0)_i}\le e^{\alpha}$ and $e^{-\alpha}\le \frac{y_i}{(P_1)_i}\le e^{\alpha}$ and so \[\frac{x_i}{y_i} = \frac{x_i}{(P_0)_i}\frac{(P_0)_i}{(P_1)_i}\frac{(P_1)_i}{y_i}\le e^{\alpha}e^{\epsilon}e^{\alpha}.\] This implies $\hat{\epsilon}\le\epsilon+2\alpha$ so we ACCEPT.

For soundness, we show that the ACCEPT conditions imply that $(P_0,P_1)$ must be at least $(\epsilon~+~10\alpha)$-pDP. The condition $e^{-\alpha}\le \frac{x_i}{(Q_0)_i}\le e^{\alpha}$ implies $|x_i-(Q_0)_i|\le \max\{(e^{\alpha}-1)(Q_0)_i, (1-e^{-\alpha})(Q_0)_i\}$. Also, by the additive Hoeffding's inequality 
\begin{align*}
\mathbb{P}\Big(\exists i \text{ s.t. } \; |x_i-(P_0)_i|\ge (Q_0)_i\max&\{(e^{\alpha}-1), (1-e^{-\alpha})\}\Big)\\
&\le n\min\{e^{-r(e^{\alpha}-1)^2\beta^2}, e^{-r(1-e^{-\alpha})^2\beta^2}\}\\
&\le \frac{1}{6}.
\end{align*}
Therefore, with probability $2/3$, \[|(P_0)_i-(Q_0)_i|\le (Q_0)_i\max\{2(e^{\alpha}-1), 2(1-e^{-\alpha})\}\le 2\alpha(Q_0)_i \] for sufficiently small $\alpha$. This implies $\max\{\frac{(P_0)_i}{(Q_0)_i}, \frac{(Q_0)_i}{(P_0)_i}\}\le e^{4\alpha}$. Similarly, \[\max\{\frac{(P_1)_i}{(Q_1)_i}, \frac{(Q_1)_i}{(P_1)_i}\}~\le~e^{4\alpha}.\] Since $\hat{\epsilon}\le \epsilon+2\alpha$ we have \[\frac{(P_0)_i}{(P_1)_i} = \frac{(P_0)_i}{(Q_0)_i}\frac{(Q_0)_i}{x_i}\frac{x_i}{y_i}\frac{y_i}{Q_1)_i}\frac{(Q_1)_i}{(P_1)_i}\le e^{4\alpha+\alpha+\epsilon+2\alpha+\alpha+4\alpha}.\] 
\end{proof}

We now turn to lower bounding the query complexity of aDP testing in the FI setting. The sample complexity is tight in $\alpha$ but deviates by a factor of $\beta$.

\begin{thm}[pDP lower bound] \label{FILB}
Let $\alpha>0$ and $\ln2>\epsilon>0$. Given side information $(Q_0,Q_1)$, any $\epsilon$-pDP property tester with proximity parameter $\alpha$ has query complexity $\Omega\left(\frac{1}{\beta\alpha^2}\right).$ 
\end{thm}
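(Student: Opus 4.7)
The plan is to apply the Le Cam-style bound of Lemma \ref{lecampropertytesting} to a pair of two-outcome algorithms that share identical side information but differ in their true privacy. Fix an output space $\{\psi,\omega\}$ and set the side information to agree with the ``good'' algorithm:
\[
Q_0 = P_0 = \beta\chi_{\psi} + (1-\beta)\chi_{\omega}, \qquad Q_1 = P_1 = e^{\epsilon}\beta\chi_{\psi} + (1 - e^{\epsilon}\beta)\chi_{\omega}.
\]
Then $(P_0,P_1)$ is exactly $\epsilon$-pDP, and because $\epsilon<\ln 2$ (so $e^{\epsilon}\beta<1$) and $\beta$ is taken sufficiently small, the minimum entry across all four atoms is exactly $\beta$, matching the side-information parameter. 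For the ``bad'' algorithm, keep $P_0'=P_0$ and let
\[
P_1' = e^{\epsilon+2\alpha}\beta\,\chi_{\psi} + (1 - e^{\epsilon+2\alpha}\beta)\,\chi_{\omega},
\]
so that $\mathcal{B}=(P_0',P_1')$ has pure DP parameter $\epsilon+2\alpha$, hence is $\alpha$-far from $\epsilon$-pDP. The same side information $(Q_0,Q_1)$ is fed to the tester in both cases: it is accurate for $\mathcal{A}$ and inaccurate for $\mathcal{B}$.

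By Lemma \ref{lecampropertytesting}, any $\epsilon$-pDP FI property tester with $r$ samples per database must satisfy
\[
\|P_0^r\times P_1^r - P_0'^r\times P_1'^r\|_{\text{TV}} \ge \tfrac{1}{3}.
\]
Because $P_0=P_0'$, this reduces to $\|P_1^r-P_1'^r\|_{\text{TV}}\ge 1/3$. Pinsker's inequality then gives $\|P_1^r-P_1'^r\|_{\text{TV}}\le \sqrt{\tfrac{r}{2}\,D_{KL}(P_1\|P_1')}$, so it remains to upper bound the Bernoulli KL divergence between parameters $p=e^{\epsilon}\beta$ and $p'=e^{\epsilon+2\alpha}\beta$. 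A Taylor expansion of $q\mapsto D_{KL}(\mathrm{Bern}(p)\|\mathrm{Bern}(q))$ around $q=p$ yields
\[
D_{KL}(P_1\|P_1') = \frac{(p-p')^2}{2p'(1-p')} + O((p-p')^3) = O\!\left(\frac{\beta\alpha^2}{1 - e^{\epsilon+2\alpha}\beta}\right) = O(\beta\alpha^2),
\]
where the hidden constant depends only on $\epsilon$ (and the denominator is bounded away from $0$ since $\epsilon<\ln 2$ and $\beta$ may be chosen small). Combining with the TV lower bound gives $1/9 \le \tfrac{r}{2}\cdot O(\beta\alpha^2)$, whence $r = \Omega(1/(\beta\alpha^2))$.

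The key step is the KL calculation: the lower bound hinges on the \emph{linear} (not quadratic) dependence on $\beta$. This comes from the fact that the gap $p'-p = e^{\epsilon}\beta(e^{2\alpha}-1) = \Theta(\beta\alpha)$ scales with $\beta$, so $(p-p')^2 = \Theta(\beta^2\alpha^2)$, but the normalisation $p'(1-p') = \Theta(\beta)$ absorbs one factor of $\beta$. If $\beta$ is instead close to $e^{-\epsilon}$, the minimum-probability atom of the construction shifts from $\psi$ to $\omega$ and the above analysis breaks; this regime can be handled either by restricting the construction to $\beta \le c\,e^{-\epsilon}$ (a mild assumption since $\beta$ is already small in the regime of interest) or by appending a small number of padding outcomes to realise arbitrary $\beta$ without changing the KL estimate up to constants. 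Either fix preserves the $\Omega(1/(\beta\alpha^2))$ bound.
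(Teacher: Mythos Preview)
Your proof is correct and follows essentially the same Le Cam/Pinsker route as the paper: construct side information that is exactly $\epsilon$-pDP, then exhibit a nearby ``bad'' algorithm differing on a single distribution by a $\Theta(\beta\alpha)$ perturbation, so that the relevant KL divergence is $\Theta(\beta\alpha^2)$ and Lemma~\ref{lecampropertytesting} forces $r=\Omega(1/(\beta\alpha^2))$. The only cosmetic differences are that the paper uses a three-atom support $(\beta,\beta,1-2\beta)$ and perturbs $P_0$ (keeping $P_1=Q_1$), then computes the KL explicitly rather than via Taylor expansion; your two-atom variant with $P_1$ perturbed yields the same bound.
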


\begin{proof} Let $\psi, \omega, \phi\in[n]$ and notice that $\beta<1/2$ provided $n>2$.
To prove the lower bound let \[Q_0= \beta\chi_{\psi}+\beta\chi_{\omega}+(1-2\beta)\chi_{\phi} \text{   and   } Q_1=e^{\epsilon}\beta\chi_{\psi}+(2-e^{\epsilon})\beta\chi_{\omega}+(1-2\beta)\chi_{\phi}\] be the side-information. Then $(Q_0,Q_1)$ is $\epsilon$-pDP. Let \[P_0=e^{-\alpha}\beta\chi_{\psi}+(2-e^{-\alpha})\beta\chi_{\omega}+(1-2\beta)\chi_{\phi} \text{   and   } P_1=Q_1\] so $(P_0,P_1)$ is $\alpha$-far from $\epsilon$-pDP. Now, 
\begin{align*}
D_{\text{KL}}(P_0|Q_0) &= \beta\ln\frac{\beta}{e^{-\alpha}\beta}+\beta\ln\frac{\beta}{(2-e^{-\alpha})\beta} = \beta\alpha+\beta\ln\left(1-\frac{1-e^{-\alpha}}{2-e^{-\alpha}}\right)\\
&\le \beta\alpha-\beta\alpha+\beta\alpha^2=\beta\alpha^2.
\end{align*}
As in Theorem \ref{nogopure}, we must have 
\begin{align*}
r &\ge \frac{2}{9}\frac{1}{D_{\text{KL}}(P_0|Q_0)} = \Omega\left(\frac{1}{\beta\alpha^2}\right).
\end{align*}
\end{proof}


\bibliographystyle{plainnat}
\bibliography{bibliography}

\end{document}